\newcommand{\bfi}{\begin{fig}}
\newcommand{\efi}{\end{fig}}
\newcommand{\btab}{\begin{tab}}
\newcommand{\etab}{\end{tab}}
\newcommand{\barr}{\begin{array}}
\newcommand{\earr}{\end{array}}
\newcommand{\beqq}{\begin{equation}}
\newcommand{\eeqq}{\end{equation}}
\newcommand{\beao}{\begin{eqnarray*}}
\newcommand{\eeao}{\end{eqnarray*}\noindent}
\newcommand{\beam}{\begin{eqnarray}}
\newcommand{\eeam}{\end{eqnarray}\noindent}
\newcommand{\bdis}{\begin{displaymath}}
\newcommand{\edis}{\end{displaymath}\noindent}
\def\bbr{{\Bbb R}}   
\def\bbn{{\Bbb N}}
\newcommand{\eps}{{\varepsilon}}
\newcommand{\vp}{{\varphi}}
\newcommand{\ov}{\overline}
\newcommand{\un}{\underline}
\newcommand{\wh}{\widehat}
\newcommand{\wt}{\widetilde}
\newcommand{\mal}{\stackrel{\mbox{\tiny$\bullet$}}{}}
\newtheorem{Satz}{Theorem}[section]
\newtheorem{Proposition}[Satz]{Proposition}
\newtheorem{Definition}[Satz]{Definition}
\newtheorem{Beispiel}[Satz]{Example}
\newtheorem{Bemerkung}[Satz]{Remark}
\newtheorem{Notiz}[Satz]{Note}
\newtheorem{thmS}[Satz]{Theorem}
\newtheorem{thmBSP}[Satz]{Example}
\begin{document}
\bibliographystyle{apalike}
\numberwithin{equation}{section}

\title{How local in time is the no-arbitrage property under capital gains 
taxes~?}
\author{Christoph K\"uhn\thanks{Institut f\"ur Mathematik, Goethe-Universit\"at Frankfurt, D-60054 Frankfurt a.M., Germany, e-mail: ckuehn@math.uni-frankfurt.de\newline I would like to thank the editor,  Prof.~Riedel, and an anonymous associate editor for their valuable comments. I am especially grateful to the anonymous referee for finding a minor error in the previous version of Proposition~\ref{19.5.2017.1} and for many valuable suggestions that lead to a substantial improvement of the 	presentation of the results.}}
\date{}
\maketitle
\sloppy

\begin{abstract}
In frictionless financial markets, no-arbitrage is a local property in time. This means that a 
discrete time model is arbitrage-free if and only if there does not exist a one-period-arbitrage. With capital gains taxes, this equivalence fails. For a model with a linear tax and one non-shortable risky stock, we introduce the concept of
{\em robust local no-arbitrage} (RLNA) as the weakest local condition which guarantees dynamic no-arbitrage. Under a sharp dichotomy condition, we prove
(RLNA). 
Since no-one-period-arbitrage is necessary for no-arbitrage, the latter is sandwiched between  two local conditions, which allows us to estimate its non-locality. 

Furthermore, we construct a stock price process such that two long positions in the same stock hedge each other. This puzzling phenomenon that cannot occur in arbitrage-free frictionless markets (or markets with proportional transaction costs) is used to show that no-arbitrage alone does not imply the existence 
of an equivalent separating measure if the probability space is infinite. 

Finally, we show that the model with a linear tax on capital gains can be written as a model with proportional
transaction costs by introducing several 
fictitious securities.    
\end{abstract}

\begin{tabbing}
{\footnotesize Keywords:} arbitrage, capital gains taxes, deferment of taxes, 
proportional transaction costs\\ 

{\footnotesize JEL classification: G10, H20 } \\
 
{\footnotesize Mathematics Subject Classification (2010): 91G10, 91B60} 
\end{tabbing}
\section{Introduction}
\setcounter{equation}{0}

In most countries, trading gains have to be taxed, which constitutes a major market friction. Tax systems are usually realization based, i.e., 
gains on assets are taxed when assets are sold and not when gains actually accrue. 
Consequently, investors hold various tax timing
options. Especially in the case of positive interest rates and
direct tax credits for 
losses, there is an incentive to realize losses immediately and defer the realization of profits (the latter is called a lock-in effect). 
Without further restrictions, there can
even exist a tax-arbitrage  by simultaneously holding both a long and short position in the same risky stock. Accordingly, a loss is realized when it accrues to declare it to the tax office, although the liquidated position is immediately rebuilt (cf., e.g., Constantinides~\cite{constantinides.1983} or Dammon and Green~\cite{dammon.green.1987}).
A popular approach in the literature -- which we also follow in the current paper -- is to exclude this trivial tax arbitrage opportunity
by not allowing for short positions in risky stocks.
At least for retail investors, this is not an unrealistic restriction
(see, e.g., Dybvig and Ross~\cite{dybvig.ross.1986} for a detailed discussion). 

Of course, in practice there are restrictions on negative tax payments and the way in which losses can be offset 
against gains. In some tax systems, losses can only be used to avoid the payment of positive taxes on gains which are realized in the same year.
In other systems, losses can also be carried forwards in time. 
Besides progressive tax rates, these restrictions are another source of nonlinearity that calls for a {\em local} arbitrage theory as developed in Ross~\cite{ross.1987} and Gallmeyer and Srivastava~\cite{gallmeyer.srivastava.2011}. The concepts differ in detail, but the basic idea is
as follows: the investor does not start -- as usual 
in arbitrage theory -- without an endowment and tries to attain a portfolio with nonnegative liquidation value, which
is positive with positive probability. By contrast, an arbitrage is a strategy that is added to another endowment and leads at least to the same liquidation value, albeit with a positive probability to a strictly higher one. 

Another practical restriction is the prohibition of so-called wash sales.
A wash sale is a sale with the aim to declare a loss to the tax office, but the security is immediately repurchased. In the US, a declaration of a loss is not possible if ``substantially identical'' securities are purchased within 30 days after the liquidation of the loss-making security. Under a limited use of losses or the prohibition of wash sales, 
and a positive interest rate, Gallmeyer and Srivastava~\cite{gallmeyer.srivastava.2011} show that in a static Arrow-Debreu security model (i.e., in  a model without redundant securities), no pre-tax arbitrage implies no local after-tax arbitrage. 
Under some parameter restrictions, similar results are obtained for 
the multi-period binomial model including dividends that are taxed at a different rate.

A different approach to the dynamic arbitrage theory with taxes is followed in  Auerbach and Bradford~\cite{auerbach.bradford.2004} and  Jensen~\cite{jensen} who consider more general ``stock value based'' linear tax rules that do not need to be realization based, i.e., taxes may depend on the mark-to-market valuation of stock positions and not only actually-obtained prices. For these tax rules, Jensen~\cite{jensen} characterizes the subset of
``valuation neutral'' tax systems.
This means that he 
starts with a martingale measure for a tax-exempt investor in a multi-period model
and characterizes all ``stock value based'' linear tax systems under which the same measure 
consistently valuates the after-tax wealth of all dynamic trading strategies.
This leads to an interesting decomposition of a linear tax into a tax on the locally riskless interest and a risk sharing component of the government for the ``risky part'' of the investment. 
The investor's gains are decomposed into the interest if her whole capital 
was invested in the riskless bank account and the gains if her actual investments were financed by a short position in the bank account. 
The decomposition depends on the riskless interest rate. This means that in general, the taxes on the gains achieved by investing the entire capital in a risky stock depend on the riskless interest rate. 
These tax systems are also ``holding period neutral'', which means that 
 the holding periods of the shares do not have an impact on investors' selling decisions and consequently, the tax options described above are worthless. 
Especially the permission of short positions in the risky stock does not lead to tax arbitrage. 
A crucial role in \cite{auerbach.bradford.2004} and \cite{jensen}
is played by a tax account for accrued but not yet realized gains, on which interest is paid. The interest payment makes the difference compared with the common real-world tax systems described above. Indeed, the separating measures that are derived in the current 
article depend on the tax rate.
Of course, from a practical perspective, non-realization-based tax systems have many drawbacks. For the mark-to-market valuation, a lot of information is required that is unavailable for less liquid stocks.
In addition, even if taxes on book profits do not need to be payed immediately but 
rather at stock's liquidation time {\em with interest}, it is easy to construct examples in which 
tax liabilities for gains on a stock exceed its liquidation price, which can cause a liquidity problem.\\

In the current article, we discuss the dynamic arbitrage theory under a {\em linear} realization-based capital gains tax. Linearity implies
that we allow for negative tax payments (immediate tax refunds) triggered by the realization of losses. However, it is important to note that our results can still be applied mutatis mutandis to more complicated tax systems. 
On the one hand, no-arbitrage in our model implies no-arbitrage under more restrictive rules. On the other hand, an arbitrage with the full use of losses induces a local arbitrage in the sense of  Definition~1 in Ross~\cite{ross.1987} for, e.g., a model in which gains and losses in a portfolio can only be offset within
the same year. For this, one can consider an investor who receives sufficient annual dividends against which losses can be offset and consequently gains from additional investments are actually taxed linearly.

The restriction to linear taxes implies that in a one-period model, pre-tax and after-tax arbitrage strategies coincide. This allows us to focus the analysis 
on the dynamic component of the problem, i.e., the impact of  tax timing options on the no-arbitrage conditions.  For further discussions on linearity as a ``desiderandum of a tidy tax system'', we refer to  Bradford~\cite{bradford.2000}.

In a dynamic framework, the after-tax no-arbitrage property is  substantially stronger than the pre-tax no-arbitrage property.
Most importantly, in frictionless market models, no-arbitrage is a local property in time. This means that in a discrete time model, the possibility to make a riskless profit can be checked period by 
period, solely based on the current stochastic asset returns,
which is of course an immense reduction of the complexity.
If there does not exist a one-period  
arbitrage -- i.e., an arbitrage strategy that only invests during one predetermined period -- there is also no dynamic arbitrage. 
By contrast, in models with proportional transaction costs or capital gains taxes, no-one-period-arbitrage is strictly weaker than no-arbitrage. In the first case, this is rather obvious since
transaction costs may take some time to amortize (cf., e.g., Example~4.1 in \cite{pham.touzi.1999}). With taxes, assets which allow the deferment of taxes on book profits become more profitable compared with interest- or dividend-paying investments, especially for a long investment horizon
(see, e.g., Black~\cite{black.1976} for a general discussion on the ``sub-optimality of dividend payments'' and \cite{kuehn.ulbricht.2013} for conditions on the stochastic stock price dynamics under which this widely held view holds true). 

The aim of the current article is to {\em quantify} how non-local the no-arbitrage property is in models with capital gains taxes. For this purpose, we introduce the concept of
{\em robust local no-arbitrage} (RLNA) as the weakest local condition which guarantees dynamic no-arbitrage. The condition is local in time, i.e., it can be verified period by period, solely based on the current stochastic stock return and without the knowledge of the stock price returns outside the current period. Robustness refers to  the fact that (RLNA) in the period under consideration guarantees dynamic no-arbitrage whatever the stock price is ``reasonably'' extended outside this period. More precisely, one considers all extensions which are arbitrage-free if the stochastic returns of the period under consideration are eliminated and requires that the stock price process  including this period remains arbitrage-free.
Put differently, this demonstrates which stochastic stock returns in one period can trigger  an arbitrage in a multi-period model. 

On the other hand, no-one-period-arbitrage is also a local property, and it is necessary (rather than sufficient) for dynamic no-arbitrage. This means that no-arbitrage is sandwiched between two local criteria.
Thus, with a sufficient local condition for (RLNA) at hand and   
a characterization of  no-one-period-arbitrage models (which does not depend on the tax rate), we can
estimate how non-local the no-arbitrage property is in models with taxes.   
The sufficient local condition is the main result of the paper (Theorem~\ref{30.10.2017.1}). It can be seen as a generalization of the simple dichotomy condition in a frictionless market with one non-shortable risky stock, which says: given any information at the beginning of 
a period, there is either the risk that the stochastic return falls below the riskless interest rate, or one knows for sure that it does not exceed the riskless interest rate.\\
 
In tax models, the same phenomenon can occur as in models with proportional transaction costs, see Schachermayer~\cite{schachermayer.2004}, which is impossible in frictionless markets by the Dalang-Morton-Willinger theorem~\cite{dmw.1990}:
on an infinite probability space, discrete time no-arbitrage alone does not imply the existence of an equivalent separating measure. 
We provide an explicit example for this in the tax model with a bank account and 
one risky stock (Example~\ref{3.1.2018.1}). In the example, the set of attainable terminal wealths is not closed regarding the convergence in probability.
By contrast, for multi-period proportional 
transaction costs models with only $2$ assets (e.g., a bank account and one risky stock), it is proven by Gigoriev~\cite{grigoriev.2005} that no-arbitrage already implies the existence of a separating measure. 
For a detailed discussion, we refer to the monograph of Kabanov and Safarian~\cite{kabanov.safarian.2009}.  
For our example, we use a price process constructed in Example~\ref{9.11.2017.4}
such that two long positions in the same stock hedge each other. This puzzling phenomenon cannot occur in arbitrage-free frictionless markets or -- more generally -- markets with proportional transactions costs (see Remark~\ref{7.1.2018.1}).
This hedging (im)possibility makes the difference between the situation in
Gigoriev's theorem and Example~\ref{3.1.2018.1}. Using the idea behind Example~\ref{3.1.2018.1}, 
we provide a two-period example for an approximate arbitrage 
in an arbitrage-free transaction costs model with a bank account and 2 risky stocks (see Example~\ref{26.1.2018.2}). We discuss the relation to the original counterexample, Example~3.1 in Schachermayer~\cite{schachermayer.2004}, that demonstrates the same phenomenon in a one-period transaction costs model with $4$ assets.

Finally, we show that the tax model can be written as a model with proportional transaction costs by introducing several 
fictitious securities. We provide a sufficient condition that guarantees the existence of a separating measure, and the set of separating measures is characterized.\\

In the analysis of the paper, we consider the so-called {\em exact tax basis} or {\em specific share identification method}, which corresponds, e.g., to the tax legislation in the US and seems economically
the  most  reasonable  tax  basis.  Here,  an  investor  who  wants  to  reduce  her  position in an asset
can freely choose which of the securities of the same kind in her portfolio (i.e., which Apple stocks)  are relevant for  taxation.  Although  all of these securities possess  the  same  market  price,  they  generally have  different purchasing prices which matters for the taxation.
Other common tax bases are the first-in-first-out rule and an average
of past purchasing prices (see Jouini, Koehl, and Touzi~\cite{jouinikoehltouzi99, jouinikoehltouzi00} and
Ben Tahar, Soner, and Touzi~\cite{bentaharsonertouzi07}, respectively, for solutions of portfolio optimization problems under these tax bases).
Both Theorem~\ref{30.10.2017.1} and Example~\ref{1.12.2017.1}, dealing with the optimality of the dichotomy condition, also hold for the first-in-first-out tax basis and the  average tax basis. The proof of the theorem needs some adjustments, but the method to consider sets like (\ref{1.12.2017.2}) also works here. 
Since this would require a lot of additional notation,
we restrict ourselves to the exact tax basis.\\  

The remainder of this article is organized as follows. In Section~\ref{25.9.2014.1}, we introduce the concept of robust local no-arbitrage~(RLNA), relate it to no-arbitrage~(NA) and
state the main result of the article (Theorem~\ref{30.10.2017.1}). The proofs 
can be found in Section~\ref{29.12.2017.2}. In Section~\ref{29.12.2017.3}, examples of the aforementioned phenomena are provided. In Section~\ref{29.12.2017.4}, the tax model is related to models with proportional transactions costs,
and the set of separating measures is characterized. The article ends with a conclusion.

\section{No-arbitrage and robust local no-arbitrage in the model of Dybvig/Koo}\label{25.9.2014.1}

Throughout the article, we fix a finite time horizon~$T\in\bbn$ and a filtered  probability 
space~$(\Omega,\mathcal{F},(\mathcal{F}_t)_{t=0,1,\ldots,T},P)$.
There is one non-shortable risky stock with price process $S=(S_t)_{t=0,1,\ldots,T}$
where $S_t\in L^0_+(\Omega,\mathcal{F}_t,P)$.
Following the notation in  Dybvig and Koo~\cite{dyb1}, $N_{s,u}\in L^0_+(\Omega,\mathcal{F}_u,P)$ denotes the number of stocks that are bought at time $s\in\{0,\ldots,T\}$ and kept in the portfolio at least after trading 
at time $u\in\{s,\ldots,T\}$. Especially, $N_{s,s}$ is the number of shares purchased at time~$s$, i.e., a position cannot be purchased and resold at the same time. On the 
other hand, a position can be sold and rebought at the same time, which is called a wash sale. For simplicity, we do not exclude wash sales, but this does not have a major impact on the main results as discussed in Remark~\ref{3.1.2018.2}. One has the constraints
\beam\label{18.9.2014.3}
 N_{s,s}\geq N_{s,s+1}\geq\ldots\geq N_{s,T}=0\quad\mbox{for all}~s\in\{0,\ldots,T\},
\eeam
which contains a short-selling restriction and a forced liquidation at $T$.
The random variable $\eta_u\in L^0(\Omega,\mathcal{F}_u,P)$ denotes the number of monetary units {\em after} trading at time~$u$.
There exists a riskless interest rate $r\in\bbr_+\setminus\{0\}$. 
The tax rate satisfies $\alpha\in[0,1)$, i.e., unless otherwise mentioned, the tax-exempt case is included. 
For simplicity, the interest on the bank account is taxed immediately. This means
that the bank account grows with the after-tax interest rate $(1-\alpha)r$.
\begin{Definition}\label{22.11.2017.1}
A process $(\eta,N)$ with $N$ satisfying (\ref{18.9.2014.3}) and $\eta=(\eta_u)_{u=0,1,\ldots,T}$, where $\eta_u\in L^0(\Omega,\mathcal{F}_u,P)$,
is called self-financing for zero initial capital iff 
\beam\label{17.8.2017.2}
\eta_u - \eta_{u-1} =  (1-\alpha)r \eta_{u-1} - N_{u,u}S_u 
+ \sum_{s=0}^{u-1}(N_{s,u-1}-N_{s,u})(S_u-\alpha(S_u-S_s)),\quad u\ge 0,
\eeam
where $\eta_{-1}:=0$.
A self-financing strategy~$(\eta,N)$ with zero initial capital  
is an arbitrage iff $P(\eta_T\ge 0)=1$ and $P(\eta_T> 0)>0$. If no such strategy exists, the market model satisfies no-arbitrage~(NA).
\end{Definition}

\begin{Bemerkung}
If the model satisfies (NA), the same model with a tax-exempt investor (i.e., with $\alpha=0$) also satisfies (NA).
Indeed, if the tax-exempt model allowed for an arbitrage, then there would exist a one-period-arbitrage. This is also a one-period-arbitrage after taxes. But, for $T\ge 2$, the converse does not hold. Consider $S_t=(1+r)^t$. By the deferment of positive taxes, a long position in the stock, along with a short position in the bank account, leads to an arbitrage.
\end{Bemerkung}

\begin{Definition}
Let $t\in\{1,\ldots,T\}$. An arbitrage in the sense of Definition~\ref{22.11.2017.1}
is a one-period-arbitrage in period~$t$ if $N_{s,u}=0$ for all $(s,u)\not=(t-1,t-1)$.
This means that $P(\eta_T\ge 0)=1$ and $P(\eta_T> 0)>0$ with
\beao
\eta_T=(1-\alpha)N_{t-1,t-1}\left(S_t-S_{t-1} -rS_{t-1}\right)(1+(1-\alpha)r)^{T-t}.
\eeao
\end{Definition}
\begin{Notiz}\label{6.2.2018.2}
Let $t\in\{1,\ldots,T\}$ with $P(S_{t-1}>0)=1$.
For $\alpha\in[0,1)$, there is no one-period-arbitrage in period~$t$ iff
\beam\label{30.10.2017.2}
P\left(P\left(\frac{S_t-S_{t-1}}{S_{t-1}} < r\ |\ \mathcal{F}_{t-1}\right)>0\ \mbox{or}\ P\left(\frac{S_t-S_{t-1}}{S_{t-1}} \le r\ |\ \mathcal{F}_{t-1}\right)=1\right) =1.
\eeam
\end{Notiz}
The proof is straightforward. In a one-period model, pre-tax and after-tax arbitrage strategies obviously coincide.
Consequently, (\ref{30.10.2017.2}) is a necessary local condition for dynamic no-arbitrage with taxes, i.e., a condition that only depends on the stochastic return~$(S_t-S_{t-1})/S_{t-1}$ in period~$t$.

\begin{Notiz}\label{6.2.2018.1}
Let $t\in\{1,\ldots,T\}$ and $R\in L^0(\Omega,\mathcal{F}_t,P)$, $R\ge -1$, such that 
$P\left(P\left(R< r\ |\ \mathcal{F}_{t-1}\right)>0\ \mbox{or}\ P\left(R \le r\ |\ \mathcal{F}_{t-1}\right)=1\right) =1$. Then, there exists an adapted process $(S_u)_{u=0,\ldots,T}$ with 
$P(S_{t-1}>0)=1$, $(S_t-S_{t-1})/S_{t-1}=R$, and $S$ satisfies (NA).
\end{Notiz}
\begin{proof}
Consider $S_u=1$ for $u\le t-1$, $S_t=1+R$, and $S_u=0$ for $u\ge t+1$.
\end{proof}
By Note~\ref{6.2.2018.1}, ``no-one-period-arbitrage in $t$'' is the {\em strongest} local property, i.e., a property that only depends on the stochastic return~$(S_t-S_{t-1})/S_{t-1}$ in period $t$, which can be derived from dynamic no-arbitrage with taxes. Put differently, from (NA)
one can only derive properties that are satisfied by all stochastic returns with the ``no-one-period-arbitrage in $t$'' property. Thus, (\ref{30.10.2017.2}) is the best necessary local condition for (NA).

As a counterpart, we introduce a {\em sufficient} local condition that gurantees  (NA). This means for every period~$t$, we again look for a condition that only depends on the stochastic return in period $t$. 

\begin{Definition}[Robust local no-arbitrage (RLNA)]\label{18.8.2017.1}
Given a filtered probability space, a nonnegative adapted stock price process~$S$ satisfies (RLNA) in period~$t\in\{1,\ldots,T\}$ iff 
$P(S_{t-1}=0, S_t>0)=0$ and for all nonnegative adapted processes~$\wh{S}=(\wh{S}_u)_{u=0,1,\ldots,t-1,t+1,\ldots,T}$ acting on the restricted time domain~$\{0,1,\ldots,t-1,t+1,\ldots,T\}$, 
the following implication holds. If $\wh{S}$ satisfies (NA) in the model with restricted time domain~$\{0,1,\ldots,t-1,t+1,\ldots,T\}$ (see Definition~\ref{17.8.2017.3} below for the precise definition of the model),
then $\wt{S}=(\wt{S}_u)_{u=0,1,\ldots,T}$ defined by
\beam\label{17.8.2017.1}
\wt{S}_u 
:= \left\{ \begin{array}{l@{\ :\ }l}    
                         \wh{S}_u   & u\le t-1\\    
                         \wh{S}_{t-1} \frac{S_t}{S_{t-1}} & u=t\\
                         \wh{S}_u\frac{S_t}{S_{t-1}} & u\ge t+1,
                         \end{array}\right.     
\eeam
with the convention $0/0:=0$, also satisfies (NA).

$S$ satisfies (RLNA) iff it satisfies (RLNA) in every period $t\in\{1,\ldots,T\}$.
\end{Definition}
\begin{Bemerkung}
If $P(S_{t-1}=0, S_t>0)>0$, there is a trivial one-period arbitrage with after-tax gain $(1-\alpha)S_t1_{\{S_{t-1}=0\}}$.
Excluding this case, zero is an absorbing state of the asset price and later 
investment opportunities in the asset disappear. 
\end{Bemerkung}

\begin{Bemerkung}\label{7.1.2018.3}
(RLNA) can be interpreted as follows. The stochastic 
return in period~$t$, i.e., $(S_t-S_{t-1})/S_{t-1}$, cannot trigger an arbitrage -- 
regardless of how the process behaves in other periods. To formalize this property, one considers all processes 
with the same stochastic return as $S$ in period $t$ which are arbitrage-free after eliminating the returns of period~$t$. One requires that all of these processes remain arbitrage-free
with period~$t$. The property is local in time since only the return in period~$t$ enters
(although it obviously depends on the time horizon and the whole filtered probability space). 
Put differently, the stock return of period~$t$ pasted together with arbitrary returns forming an arbitrage-free process without period~$t$ should lead to an arbitrage-free process. This justifies the term ``robust''. By construction, (RLNA) is in the sense described above the weakest local condition that guarantees (NA).

For $\alpha=0$, it is equivalent 
to (NA), but for $\alpha>0$, it is nevertheless surprisingly strong (see 
Proposition~\ref{30.10.2017.3}(ii) and Example~\ref{9.11.2017.2}, resp.).\\

To be as general as possible, Definition~\ref{18.8.2017.1} allows that $S$ takes the value zero with positive probability albeit 
that holds minor relevance. For $S>0$, $\wt{S}$ can easily be interpreted as the wealth process of a self-financing portfolio of a tax-exempt investor
buying one stock at price $\wh{S}_0$, selling it at price~$\wh{S}_{t-1}$ to invest the
reward during $t-1$ and $t$ in $S$ and to repurchase $\wh{S}$ at time~$t$. 
\end{Bemerkung}
It remains to formalize the elimination of period~$t$. The idea is that 
the time between $t-1$ and $t$ is eliminated, and consequently
gains accruing in this period -- in both the stock and the bank account -- disappear. On the other hand, the information that is available for the decision on the investment during the next period remains the same as in the original model.

\begin{Definition}[Elimination of period $t$]\label{17.8.2017.3}
The model with eliminated period~$t\in\{1,\ldots,T\}$ and price process 
$(\wh{S}_u)_{u=0,1,\ldots,t-1,t+1,\ldots,T}$ is defined as follows.
A strategy is given by $N=(N_{s,u})_{s,u\in\{0,\ldots,t-1,t+1,\ldots,T\}, s\le u}$
satisfying (\ref{18.9.2014.3}) without time~$t$, 
where $N_{s,u}$ is $\mathcal{F}_u$-measurable if $u\not=t-1$ 
and $N_{s,t-1}$ is $\mathcal{F}_t$-measurable. In the self-financing condition~(\ref{17.8.2017.2}), the increments of $\eta$
between $t-1$ and $t$ and between $t$ and $t+1$ disappear and are replaced by
\beam\label{7.1.2018.2}
\eta_{t+1} - \eta_{t-1} =  (1-\alpha)r \eta_{t-1} - N_{t+1,t+1}\wh{S}_{t+1} 
+ \sum_{s=0}^{t-1}(N_{s,t-1}-N_{s,t+1})(\wh{S}_{t+1}-\alpha(\wh{S}_{t+1}-\wh{S}_s)).
\eeam
(Of course, in the special case $t=T$, one requires $N_{s,T-1}=0$ for $s=0,\ldots,T-1$, and the liquidation value is given by $\eta_{T-1}$ instead of $\eta_T$, i.e., (\ref{7.1.2018.2}) does not apply) 
\end{Definition}
(\ref{7.1.2018.2}) means that the interest between $t-1$ and $t+1$ is  
$r \eta_{t-1}$, i.e., since the time between $t-1$ and $t$ is eliminated, interest is only paid for one time unit. 
Notionally, the investor has to close her stock position at unit price~$\wt{S}_{t-1}$ and repurchase at price $\wt{S}_t$ per share. To make this procedure self-financing, the fraction~$S_{t-1}/S_t$ of the position at time~$t-1$ is repurchased. This motivates the pasting of the price process in (\ref{17.8.2017.1}). 

The assumption that $N_{s,t-1}$ has only to be $\mathcal{F}_t$-measurable is quite natural: the decision on the investment in $\wh{S}_{t+1}-\wh{S}_{t-1}$
can be conditioned on the information~$\mathcal{F}_t$. 
This means that the elimination of the time between $t-1$ and $t$ has no impact on the information 
that is available for the decision on the investment during the next period. For $S>0$, the investment opportunities between $t$ and $t+1$ 
are the same in the markets $\wh{S}$ and $\wt{S}$.

\begin{Bemerkung}
A simpler way to eliminate period $t$ from the model would be that the investor has to liquidate her stock positions at time~$t-1$ and may rebuild them at time~$t$. This would lead to a stronger (RLNA)-condition since in the model without period~$t$, taxes could not be deferred beyond $t-1$. This may even turn sure losses in the stock during period~$t$ into ``good deals''. However, we think that this 
would be the wrong condition since the addition of a period
in the inner of the time domain should not be the reason why taxes can be deferred
over this point. 
\end{Bemerkung}

\begin{Proposition}\label{30.10.2017.3}
We have that 
\begin{itemize}
\item[(i)] For $\alpha\in[0,1)$,\  (RLNA)\ $\Rightarrow$\ (NA)  
\item[(ii)]  For $\alpha=0$,\ (RLNA)\ $\Leftrightarrow$\ (NA)     
\end{itemize}
\end{Proposition}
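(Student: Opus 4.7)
The plan is to prove (i) by induction on the horizon~$T$ and to deduce the reverse implication in (ii) from the classical local characterization of (NA) in the frictionless case~$\alpha=0$.

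For (i), the base case $T=1$ is immediate because in a one-period model pre-tax and after-tax arbitrages coincide, so both (RLNA) at $t=1$ and (NA) reduce to the one-period condition on $S_1/S_0$. For the induction step, assume (i) up to horizon~$T-1$ and let $S$ on $\{0,\ldots,T\}$ satisfy (RLNA). The main auxiliary claim is the \emph{restriction lemma}: $S|_{\{0,\ldots,T-1\}}$ satisfies (RLNA) for horizon~$T-1$. Granted it, the induction hypothesis yields (NA) for $S|_{\{0,\ldots,T-1\}}$ at horizon~$T-1$. Now apply (RLNA) for $S$ at $t=T$ with the choice $\wh{S}:=S|_{\{0,\ldots,T-1\}}$: by Definition~\ref{17.8.2017.3} the restricted horizon-$T$ model with $t=T$ eliminated is literally the horizon-$T-1$ model of $S|_{\{0,\ldots,T-1\}}$, so the hypothesis of (RLNA) is met. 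It follows that $\wt{S}$ is (NA); formula~(\ref{17.8.2017.1}), together with $P(S_{T-1}=0,S_T>0)=0$ (which makes $0$ absorbing for~$S$) and the convention $0/0:=0$, yields $\wt{S}=S$, completing the induction.

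To prove the restriction lemma, fix $t\in\{1,\ldots,T-1\}$ and take $\wh{S}'$ on $\{0,\ldots,T-1\}\setminus\{t\}$ that is (NA) on its horizon-$T-1$ restricted model. Let $\tau$ denote the largest element of the restricted domain strictly below~$T$ (either $T-1$ or $T-2$) and extend to $\wh{S}$ on $\{0,\ldots,T\}\setminus\{t\}$ via $\wh{S}_T:=\wh{S}_\tau(1+(1-\alpha)r)$, pacing the last transition to the bank rate. Two transfers of (NA) are then needed. First, $\wh{S}$ is (NA) on its horizon-$T$ restricted model: given a candidate strategy with terminal value~$\eta_T$, construct a companion strategy on $\wh{S}'$'s horizon-$T-1$ restricted model by closing all remaining stock positions at time~$\tau$ instead of carrying them to~$T$. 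A direct calculation from the linear-tax formula~(\ref{17.8.2017.2}), using $\wh{S}_T=\wh{S}_\tau(1+(1-\alpha)r)$, produces the tax-adjusted identity $\eta'_\tau(1+(1-\alpha)r)-\eta_T=\alpha(1-\alpha)r\sum_s N_{s,\tau}\wh{S}_s\ge 0$, so an arbitrage on $\wh{S}$ would yield one on $\wh{S}'$ and contradict (NA). Hence (RLNA) for $S$ at~$t$ delivers (NA) for the pasted $\wt{S}$ on $\{0,\ldots,T\}$. Second, $\wt{S}|_{\{0,\ldots,T-1\}}$ coincides with the pasting $\wt{S}'$ of $\wh{S}'$ with $S_t/S_{t-1}$, and any arbitrage on $\wt{S}'$ at horizon~$T-1$ extends to one on $\wt{S}$ at horizon~$T$ by simply holding the bank from~$T-1$ to~$T$; thus $\wt{S}'$ is (NA), verifying (RLNA) for $S|_{\{0,\ldots,T-1\}}$ at~$t$.

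For (ii), the forward implication is (i). For the converse with $\alpha=0$, (NA) for one non-shortable risky asset in the frictionless setting is equivalent to condition~(\ref{30.10.2017.2}) at every period. Fix~$t$ and admissible $\wh{S}$. By~(\ref{17.8.2017.1}), the period-$u$ return of $\wt{S}$ equals $\wh{S}_u/\wh{S}_{u-1}$ for $u\le t-1$ and $u\ge t+2$, equals $S_t/S_{t-1}$ at $u=t$, and equals $\wh{S}_{t+1}/\wh{S}_{t-1}$ at $u=t+1$, which is precisely the return of $\wh{S}$ over its merged period in the restricted model (in which the bank also grows by $1+r$). Each such return straddles $1+r$ by the (NA) hypotheses on $S$ and on $\wh{S}$, so $\wt{S}$ satisfies (\ref{30.10.2017.2}) at every period and is (NA). The main obstacle is the first transfer in the restriction lemma: verifying that the bank-paced extension preserves (NA) requires the explicit tax-adjusted identity above and some care with the boundary measurability, where $N_{\cdot,t-1}$ is $\mathcal{F}_t$-measurable by Definition~\ref{17.8.2017.3}, when transferring strategies across horizons.
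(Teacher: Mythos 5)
Your argument is correct, but it follows a genuinely different route from the paper's. The paper proves (i) by contraposition: assuming (NA) fails, it takes the \emph{minimal} liquidation horizon $t$ admitting an arbitrage, extends $S_{|\{0,\ldots,t-1\}}$ by \emph{zero} after $t$ to obtain $\wh{S}$, shows that $\wh{S}$ is arbitrage-free in the restricted model because every payoff $\wh{V}(\wh{N})$ is dominated by $\frac{1}{1+(1-\alpha)r}$ times a payoff attainable in $S$ with liquidation by $t-1$ (arbitrage-free by minimality of $t$), and observes that the pasted $\wt{S}$ coincides with $S$ up to $t$ and hence admits an arbitrage — so (RLNA) fails at $t$. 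You instead run an induction on $T$ through a restriction lemma ((RLNA) at horizon $T$ implies (RLNA) at horizon $T-1$ for the truncated process), using a \emph{bank-rate} extension $\wh{S}_T:=\wh{S}_\tau(1+(1-\alpha)r)$ rather than a zero extension; your identity $(1+(1-\alpha)r)\eta'_\tau-\eta_T=\alpha(1-\alpha)r\sum_s N_{s,\tau}\wh{S}_s\ge 0$ checks out (including the $s=\tau$ term), and the final application of (RLNA) at $t=T$ with $\wh{S}=S_{|\{0,\ldots,T-1\}}$ correctly exploits that the eliminated-last-period model of Definition~\ref{17.8.2017.3} is literally the horizon-$(T-1)$ model. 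The paper's route is shorter and pinpoints the period at which (RLNA) fails when (NA) fails; yours isolates a structural stability property of (RLNA) under truncation of the horizon that is of some independent interest, at the cost of the extra verification that the bank-rate extension preserves (NA). For (ii) the content is essentially the same in both proofs — the frictionless local characterization of (NA) under a short-sale constraint, with the one genuinely delicate point (the merged period $t-1\to t+1$ with $N_{\cdot,t-1}$ only $\mathcal{F}_t$-measurable, matched against the period-$(t+1)$ return of $\wt{S}$) correctly identified in both; the paper phrases it contrapositively, you phrase it directly.
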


In the following, we provide a sufficient condition for (RLNA). Example~\ref{9.11.2017.2} shows that it is sharp in some sense.

\begin{thmS}\label{30.10.2017.1}
Let $t\in\{1,\ldots,T\}$ and assume that 
\beam\label{24.6.2018.1}
P(S_{t-1}=0,\ S_t>0)=0
\eeam
and
\beam\label{29.8.2017.1}
P\left(P\left(\frac{S_t-S_{t-1}}{S_{t-1}} < \kappa_{t,T}\ |\ \mathcal{F}_{t-1}\right)>0\ \mbox{or}\ P\left(\frac{S_t-S_{t-1}}{S_{t-1}} \le (1-\alpha)r\ |\ \mathcal{F}_{t-1}\right)=1\right) =1,
\eeam
 where
\beam\label{30.10.2017.4}
\kappa_{t,T} := \frac{\left(-\alpha+(1-\alpha)^2 r\right)(1+(1-\alpha)r)^{T-t} + \alpha}{(1+(1-\alpha)r)^{T-t} - \alpha},
\eeam
again with the convention $0/0:=0$. Then, $S$ satisfies (RLNA) in period~$t$.
\end{thmS}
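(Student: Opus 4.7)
The plan is to proceed by contradiction. Suppose some nonnegative adapted process $\wh{S}$ on the restricted time domain satisfies (NA) in the sense of Definition~\ref{17.8.2017.3}, but the pasted process $\wt{S}$ defined by (\ref{17.8.2017.1}) admits an arbitrage $(\eta,N)$ with $\eta_T \ge 0$ a.s.\ and $P(\eta_T > 0)>0$. My goal is to construct from $(\eta,N)$ a self-financing strategy on the restricted time domain that is an arbitrage for $\wh{S}$, contradicting its (NA) property.

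First, I would decompose the share-counts $N_{s,u}$ according to their timing relative to $t$: positions purchased and liquidated entirely on $\{0,\ldots,t-1\}$, positions purchased and liquidated entirely on $\{t,\ldots,T\}$, and ``crossing'' positions purchased at some $s \le t-1$ and still held after time~$t$, or purchased at $t$ itself and held onward. Using the pasting rule (\ref{17.8.2017.1}), the self-financing recursion (\ref{17.8.2017.2}) re-expresses each crossing position's contribution as a term involving only $\wh{S}$ multiplied by a ``crossing factor'' depending on $(S_t - S_{t-1})/S_{t-1}$, since $\wt{S}_u = \wh{S}_u \cdot S_t/S_{t-1}$ for $u \ge t$, while the tax-basis contribution $\alpha \wh{S}_s$ for positions purchased before $t$ is preserved under the pasting.

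Next, I would invoke the dichotomy (\ref{29.8.2017.1}) to partition $\Omega$ into two disjoint $\mathcal{F}_{t-1}$-measurable events $A$ and $B = \Omega \setminus A$, where on $B$ the period-$t$ return is a.s.\ at most $(1-\alpha)r$, and on $A$ there is positive $\mathcal{F}_{t-1}$-conditional probability that the return falls below $\kappa_{t,T}$. On $B$, any payoff contribution from shares carried across period~$t$ is pathwise dominated by holding the equivalent monetary amount in the bank account from $t-1$ to $t$ and (if desired) rebuying stock at $t$, so replacing such holdings with cash cannot decrease $\eta_T$. On $A$, the worst-case scenario realises a loss on every crossing position; the precise algebraic calibration of $\kappa_{t,T}$ in (\ref{30.10.2017.4}) is engineered so that the resulting tax refund, even after being compounded at the after-tax rate over the remaining $T-t$ periods, strictly underperforms liquidating at $t-1$. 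Together with $\eta_T \ge 0$ a.s., this forces every crossing position to vanish a.s.\ on $A$.

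Combining the two cases produces a self-financing strategy $(\hat\eta,\hat N)$ for $\wh{S}$ in the sense of Definition~\ref{17.8.2017.3} with $\hat\eta_T \ge \eta_T \ge 0$ a.s.\ and $P(\hat\eta_T > 0)>0$, yielding the desired contradiction. The main obstacle I anticipate is the analysis on event $A$: under the exact tax basis each crossing position has its own tax lot $\wh{S}_s$ and may be liquidated in pieces at different future times $u \in \{t+1,\ldots,T\}$, so the ``sell at $t-1$ dominates hold-across-$t$'' comparison must be proved uniformly in $s$ and in the future liquidation schedule. This uniformity is exactly what produces the exponent $T-t$ and the threshold $\kappa_{t,T}$ in (\ref{30.10.2017.4}). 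Simultaneously handling the measurability when merging $\mathcal{F}_{t-1}$- and $\mathcal{F}_t$-measurable decisions into the $\mathcal{F}_t$-measurable merged positions $\hat N_{s,t-1}$ permitted by Definition~\ref{17.8.2017.3} will be the technically delicate part.
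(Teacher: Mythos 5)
Your treatment of the event $B$ (sure return at most $(1-\alpha)r$) and your overall two-event dichotomy match the paper's Step~1, and your identification of where the exponent $T-t$ in $\kappa_{t,T}$ comes from is correct. However, the heart of your argument on the event $A$ contains a genuine gap: the claim that ``the worst-case scenario realises a loss on every crossing position \ldots together with $\eta_T\ge 0$ a.s., this forces every crossing position to vanish a.s.\ on $A$'' does not follow. The bad return $<\kappa_{t,T}$ occurs only with positive $\mathcal{F}_{t-1}$-conditional probability, not surely, and even on the paths where it occurs, the terminal wealth $\eta_T$ aggregates three other contributions that can offset the crossing loss: gains from round trips completed before $t-1$, gains on the portion of the position held after the repurchase at time $t$, and the deferral benefit on pre-$t$ book profits. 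In particular, a favourable post-$t$ return correlated with the bad period-$t$ return can rescue a crossing position (this is exactly the hedging mechanism exploited in Example~\ref{9.11.2017.2} to show $\kappa_{t,T}$ is essentially sharp), so nonnegativity of $\eta_T$ alone cannot kill the crossing positions. Consequently your final step --- always producing a dominating arbitrage in the $\wh{S}$-market --- cannot work on $A$; what one must show there instead is that $N$ was never an arbitrage, i.e.\ $P(\eta_T<0)>0$.

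The paper closes this gap with three ingredients your proposal lacks. First, it decomposes $\wt{X}_{s,u}=I_1+I_2+I_3+I_4$ and proves the key pointwise bound $I_3^{s,u}+I_4^{s,u}\le I<0$ on the bad paths, uniformly in $s\le t-1$, $u\ge t$, where $I$ is $\mathcal{F}_t$-measurable; summing gives $V_3+V_4\le W:=I\sum_{s\le t-1}N_{s,t-1}$. Second, it controls the pre-$t$ gains $V_1$ (which are $\mathcal{F}_{t-1}$-measurable) by invoking (NA) of the restricted market to force $P(V_1\le 0)=1$, and it controls the post-$t$ gains $V_2$ by observing that $1_{\{W<0\}}V_2$ is attainable in the arbitrage-free submarket on $\{t-1,t+1,\ldots,T\}$ with $\mathcal{F}_t$-measurable initial positions, so $W+V_2$ cannot be a.s.\ nonnegative when $P(W<0)>0$. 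Third, because under taxes a position with negative mark-to-market value at $t-1$ can still become surely nonnegative later through deferral, the case distinction must be organised around the essential supremum $\wh{M}^N$ of the $\mathcal{F}_{t-1}$-events on which the pre-$t-1$ part of $N$ extends to a loss-free strategy in the restricted market; your proposal has no substitute for this device, and without it the interaction between the pre-$t-1$ history and the crossing positions is not controlled. These are not routine technicalities but the substance of the proof, so the proposal as written is incomplete.
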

Condition~(\ref{29.8.2017.1}) is a dichotomy: given 
any information at time~$t-1$, there is either the risk that the stochastic return in period~$t$ falls below $\kappa_{t,T}$ or one knows for sure that it does not exceed  $(1-\alpha)r$.
It turns out that both scenarios make it impossible that the addition of period~$t$  
to an arbitrage-free model induces an arbitrage.

The boundary~$\kappa_{t,T}$ is chosen small enough such that in the case that it is undershot, the loss in period~$t$ dominates the benefit from possible tax defers from $t-1$ to $T$.
Upon first glance, $\kappa_{t,T}$ may look unnecessarily small to guarantee (RLNA);  namely, taxes on gains accrued up to time $t-1$ can also be deferred to $T$ in the comparison model~$\wh{S}$ with eliminated period~$t$, which has to be arbitrage-free by definition. However, since one only requires that $\kappa_{t,T}$ is undershot with positive probability, there can be a gain in period~$t$. This gain could be used to hedge against a bad outcome in the market after $t$ and allow for an arbitrage in the model including period~$t$. 

In Example~\ref{9.11.2017.2}, we construct such a market. 
The example satisfies (\ref{30.10.2017.4}) only with 
a larger boundary~$\kappa<r$ and does not satisfy (RLNA).
\begin{Bemerkung}
Putting $\alpha=0$, (\ref{29.8.2017.1}) reduces to (\ref{30.10.2017.2})
that is necessary and sufficient for ``no-one-period-arbitrage in $t$'', both with and without taxes (see Note~\ref{6.2.2018.2}). 
\end{Bemerkung}
\begin{Bemerkung}
Putting Note~\ref{6.2.2018.1} and Remark~\ref{7.1.2018.3} together, (NA) for some $\alpha\in[0,1)$ is sandwiched between two local conditions, which are the best necessary and sufficient local conditions in the sense described above. In addition, by Theorem~\ref{30.10.2017.1},
(\ref{29.8.2017.1}) guarantees (RLNA), and it is sharp in some sense (see Examples~\ref{9.11.2017.2}).
Thus, comparing the bounds in (\ref{29.8.2017.1}) and (\ref{30.10.2017.2}) provides a  good estimate of how non-local the no-arbitrage property is under taxes (see the conclusion).
\end{Bemerkung}

\begin{Proposition}\label{19.5.2017.1}
Let $\alpha\in(0,1)$ and assume that (\ref{24.6.2018.1}) and
\beam\label{18.5.2017.1}
P\left(\frac{S_t-S_{t-1}}{S_{t-1}}\le (1-\alpha)r\ |\ \mathcal{F}_{t-1}\right)>0,\ \mbox{$P$-a.s.}
\eeam
hold for all $t=1,\ldots,T$, again with the convention $0/0:=0$. Then, the model satisfies (NA).
\end{Proposition}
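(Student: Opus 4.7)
The plan is to prove (NA) by induction on the horizon $T$. For the base case $T=1$, the constraint $N_{1,1}=0$ from (\ref{18.9.2014.3}) leaves only buy-and-hold strategies, and (\ref{17.8.2017.2}) reduces to $\eta_1=N_{0,0}(1-\alpha)S_0(R_1-r)$. Since $\alpha\in(0,1)$ and $r>0$ give $(1-\alpha)r<r$, (\ref{18.5.2017.1}) implies $P(R_1<r\mid\mathcal{F}_0)>0$ a.s.; together with $\eta_1\ge 0$, this forces $N_{0,0}=0$ a.s., so $\eta_1=0$ (the degenerate case $S_0=0$ is trivial by (\ref{24.6.2018.1})).

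For the inductive step, I assume the result for horizon $T-1$ and take a self-financing $(\eta,N)$ on $[0,T]$ with zero initial capital and $\eta_T\ge 0$ a.s. Unrolling (\ref{17.8.2017.2}) at $u=T$, using $N_{s,T}=0$ and the induced $N_{T,T}=0$, yields
\[
\eta_T = (1+(1-\alpha)r)\eta_{T-1} + (1-\alpha)S_{T-1}(1+R_T)M_{T-1} + \alpha\sum_{s=0}^{T-1}N_{s,T-1}S_s,
\]
with $M_{T-1}:=\sum_{s=0}^{T-1}N_{s,T-1}$ and $R_T:=(S_T-S_{T-1})/S_{T-1}$; as a function of $R_T$ this is $\mathcal{F}_{T-1}$-affine with nonnegative slope $(1-\alpha)S_{T-1}M_{T-1}$. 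Let $\tilde\eta$ denote the value of the right-hand side at the formal substitution $R_T=(1-\alpha)r$. If $\tilde\eta<0$ on some $\mathcal{F}_{T-1}$-set $C$ with $P(C)>0$, then (\ref{18.5.2017.1}) yields $P(C\cap\{R_T\le(1-\alpha)r\})>0$, and on this set the affine-nondecreasing expression is $\le\tilde\eta<0$, contradicting $\eta_T\ge 0$. Hence $\tilde\eta\ge 0$ a.s.

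Next I establish by direct expansion the key identity
\[
(1+(1-\alpha)r)L_{T-1} \;=\; \tilde\eta \;+\; \alpha(1-\alpha)r\sum_{s=0}^{T-1}N_{s,T-1}S_s,
\]
where $L_{T-1}:=\eta_{T-1}+\sum_{s=0}^{T-1}N_{s,T-1}((1-\alpha)S_{T-1}+\alpha S_s)$ is the instantaneous-liquidation value at $T-1$ (the residual $\alpha(1-\alpha)r\sum N_{s,T-1}S_s$ is the tax-deferral penalty of holding one extra period at the after-tax riskfree return). Both summands on the right are $\ge 0$, so $L_{T-1}\ge 0$ a.s. I then define a modified strategy $(\eta',N')$ on the horizon-$(T-1)$ model by $N'_{s,u}:=N_{s,u}$ for $u\le T-2$ and $N'_{s,T-1}:=0$ for all $s$; this preserves the monotonicity (\ref{18.9.2014.3}), and the self-financing recursion at $u=T-1$ yields $\eta'_{T-1}=L_{T-1}$. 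Applying the inductive hypothesis to this restricted model, which inherits (\ref{18.5.2017.1}) and (\ref{24.6.2018.1}) on $\{1,\dots,T-1\}$, gives $L_{T-1}=0$ a.s.

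Substituting $L_{T-1}=0$ into the identity forces $\tilde\eta=-\alpha(1-\alpha)r\sum_s N_{s,T-1}S_s\le 0$; combined with $\tilde\eta\ge 0$ and $\alpha(1-\alpha)r>0$, this gives $\sum_s N_{s,T-1}S_s=0$ a.s. Because (\ref{24.6.2018.1}) makes zero an absorbing state of $S$, positions in already-worthless stocks contribute nothing to $\eta$ and can be set to zero without loss of generality, while on $\{S_s>0\}$ we obtain $N_{s,T-1}=0$ for every $s\le T-1$; hence $M_{T-1}=0$ and $\eta_{T-1}=L_{T-1}=0$, so the displayed formula forces $\eta_T=0$ a.s., closing the induction. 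The principal obstacles I anticipate are (i) the bookkeeping for the algebraic identity connecting $\tilde\eta$ with $L_{T-1}$, and (ii) checking that the ``liquidate-at-$T-1$'' strategy $(\eta',N')$ really is a legitimate self-financing strategy in the horizon-$(T-1)$ model so that the inductive hypothesis genuinely applies.
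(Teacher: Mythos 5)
Your proof is correct, and it takes a genuinely different route from the paper's. The paper argues pathwise: it sets $A_s:=\{S_t\le(1+(1-\alpha)r)S_{t-1}\ \forall t=s+1,\dots,T\}$, shows $P(A_s\mid\mathcal{F}_s)>0$ by backward induction from (\ref{18.5.2017.1}), and observes that on $A_s\cap\{S_s>0\}$ every share bought at time $s$ obeys the strict estimate $X_{s,t}\le-\alpha(1-\alpha)rS_s(1+(1-\alpha)r)^{T-t}<0$ of (\ref{18.5.2017.3}); letting $\tau$ be the first time a strategy buys at a positive price, the liquidation value is then strictly negative on the positive-probability event $\{\tau<T\}\cap A_\tau$. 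You instead induct on the horizon: the affine dependence of $\eta_T$ on $R_T$ plus (\ref{18.5.2017.1}) gives $\tilde\eta\ge 0$, and your identity $(1+(1-\alpha)r)L_{T-1}=\tilde\eta+\alpha(1-\alpha)r\sum_s N_{s,T-1}S_s$ isolates, as an exact residual between the ``hold'' and ``liquidate-at-$T-1$'' values, precisely the tax-drag term $\alpha(1-\alpha)rS_s$ that powers the paper's inequality (the after-tax loss from holding one more period at the break-even return $(1-\alpha)r$); the inductive hypothesis kills $L_{T-1}$, and strict positivity of the residual kills all terminal holdings of non-worthless stock. Both of the concerns you flag do check out: the identity is a short expansion in which the $s=T-1$ term $N_{T-1,T-1}S_{T-1}$ cancels, leaving $L_{T-1}=(1+(1-\alpha)r)\eta_{T-2}+\sum_{s\le T-2}N_{s,T-2}\left((1-\alpha)S_{T-1}+\alpha S_s\right)=\eta'_{T-1}$, and $(\eta',N')$ satisfies (\ref{18.9.2014.3}) in the horizon-$(T-1)$ model, whose hypotheses are inherited period by period. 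As for what each approach buys: the paper's argument is non-inductive, exhibits a single explicit bad event that works for all strategies simultaneously, and is reused almost verbatim to verify robust no-arbitrage in the proof of Proposition~\ref{6.11.2017.2}; yours is more local in time, needing only the one-step-ahead condition at the terminal date together with no-arbitrage of the truncated model, and turns the economic content (a strictly positive tax penalty for deferral at the break-even return) into an identity rather than an estimate.
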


\begin{Bemerkung}
Condition (\ref{18.5.2017.1}) means that the investor can never be sure that the pre-tax profit in the 
stock strictly exceeds the after-tax profit in the bank account. 
This ensures that the deferment of taxes  on profits in the stock cannot be used to generate an arbitrage. 

On the other hand, for $\alpha=0$ -- which is excluded in 
Proposition~\ref{19.5.2017.1} -- (\ref{18.5.2017.1}) is an even strictly weaker condition than (NA).
\end{Bemerkung}

\section{Proofs}\label{29.12.2017.2}

The following quantities prove useful in all proofs. For every pair $(s,u)$ with $s<u$, 
the after-tax gain at time~$T$ of the self-financing investment in the stock~$S$ between $s$ and $u$ is given by
\beam\label{28.11.2017.2}
X_{s,u} := \left[S_u - \alpha(S_u-S_s)\right](1+(1-\alpha)r)^{T-u} - S_s(1+(1-\alpha)r)^{T-s},\quad s<u.   
\eeam
It is immediate that for any self-financing strategy~$(\eta,N)$ with zero initial capital,
the liquidation value~$\eta_T$, that is uniquely determined by $N$, can be written as 
\beam\label{27.11.2017.1}
V(N):= \eta_T = \sum_{s=0}^{T-1}\sum_{u=s+1}^T (N_{s,u-1}-N_{s,u}) X_{s,u}.
\eeam
For price process~$\wt{S}$ instead of $S$, $\wt{X}_{s,u}$ and $\wt{V}(N)$ are defined accordingly. In the market with smaller time domain~$\{0,\ldots,t-1,t+1,\ldots,T\}$ 
and stock price process~$\wh{S}$ (cf. Definition~\ref{17.8.2017.3}), these quantities read 
\beam\label{22.12.2017.1}
\wh{X}_{s,u} & := & \left[\wh{S}_u - \alpha(\wh{S}_u-\wh{S}_s)\right](1+(1-\alpha)r)^{T-u-1_{(u\le t-1)}}\nonumber\\
& & \qquad  - \wh{S}_s(1+(1-\alpha)r)^{T-s-1_{(s\le t-1)}},\quad s,u\not=t,\ s<u,   
\eeam
and 
\beam\label{27.11.2017.2}
\wh{V}(N) :=
\sum_{s=0,\ s\not=t}^{T-1}\left[\sum_{u=s+1,\ u\not=t,\ u\not=t+1}^T (N_{s,u-1}-N_{s,u})\wh{X}_{s,u} + (N_{s,t-1}-N_{s,t+1})\wh{X}_{s,t+1}\right].
\eeam

\begin{proof}[Proof of Proposition~\ref{30.10.2017.3}]
Ad (i): Assume that $(S_t)_{t=0,\ldots,T}$ does not satisfy (NA) and let 
\beao
t:=\min\left\{u\in\{1,\ldots,T\}\ |\ \exists\ \mbox{arbitrage $(\eta,N)$ with $P(N_{s,l}=0)=1$ for all $l\ge u$}\right\}.
\eeao
This means that $t$ is the smallest number~$u$ such that the model with liquidation time~$u$,
i.e., all stock positions have to be liquidated  up to time~$u$,
allows for an arbitrage.
Let us show that $S$ does not satisfy (RLNA) in period~$t$. W.l.o.g. $P(S_{t-1}=0, S_t>0)=0$, since otherwise (RLNA) does not hold by definition.
We consider the process~$\wh{S}$ acting on the time domain~$\{0,\ldots,t-1,t+1,\ldots,T\}$  and being defined by $\wh{S}_u:=S_u1_{(u\le t-1)}$. 
The corresponding $\wt{S}$ from (\ref{17.8.2017.1})
coincides with $S$ on $\{0,\ldots,t\}$ that allows for an arbitrage. 
Thus, it remains to show that $\wh{S}$ satisfies (NA) in the model from Definition~\ref{17.8.2017.3}. By $S\ge 0$ and $r>0$, one has that $\wh{X}_{s,u}\le (1+(1-\alpha)r)^{-1} X_{s,u\wedge (t-1)}$ for all $s\le t-2$, $u\not= t$, 
$\wh{X}_{t-1,u}\le 0$ for all $u\ge t+1$, and $\wh{X}_{s,u}=0$ for all $s\ge t+1$. For every strategy $N$ in the market~$\wh{S}$, this implies 
\beam\label{24.11.2017.1}
\wh{V}(N) & = & \sum_{s=0}^{t-1} \left[\sum_{u=s+1,\ u\not=t,\ u\not=t+1}^T (N_{s,u-1}-N_{s,u})\wh{X}_{s,u} + (N_{s,t-1}-N_{s,t+1})\wh{X}_{s,t+1}\right]
\nonumber\\
& \le & \frac1{1+(1-\alpha)r} \left[\sum_{s=0}^{t-2}\sum_{u=s+1}^{t-2} (N_{s,u-1}-N_{s,u})X_{s,u}
+ \sum_{s=0}^{t-2} N_{s,t-2} X_{s,t-1}\right],
\eeam
using that $N_{s,T}=0$. The RHS of (\ref{24.11.2017.1}) can be generated in the market with price process~$S$ and liquidation time~$t-1$. But, by the minimality of $t$, $S$ does not allow for an arbitrage if stock positions 
have to be liquidated up to time~$t-1$. Thus, (\ref{24.11.2017.1}) implies (NA)
for $\wh{S}$, and we are done.\\

Ad (ii): It remains to show ``$\Leftarrow$''.  Of course, for $\alpha=0$, (\ref{17.8.2017.2}) reduces to the standard self-financing condition in frictionless markets. Assume that $S$ does not satisfy (RLNA) in some period~$t\in\{1,\ldots,T\}$. If $P(S_{t-1}=0, S_t>0)>0$, $S$ allows for an arbitrage, and we are done. Thus, we can assume that there is a process $\wh{S}$ satisfying (NA), but the corresponding $\wt{S}$ from (\ref{17.8.2017.1}) allows for an arbitrage. Since $\alpha=0$,
there is some $u\in\{1,\ldots,T\}$ such that $\wt{S}$ allows for a one-period-arbitrage 
in period~$u$, i.e., between $u-1$ and $u$ (see the proof of Proposition~5.11 in 
F\"ollmer and Schied~\cite{foellmer.schied.2011}, which also holds under short-selling constraints in some assets). By construction of the model in 
Definition~\ref{17.8.2017.3}, for $u\in\{1,\ldots,T\}\setminus\{t\}$, a 
one-period arbitrage of $\wt{S}$ in $u$ induces a one-period arbitrage of $\wh{S}$
(Note that for $u=t+1$, a one-period arbitrage of $\wt{S}$ would induce an
arbitrage between $t-1$ and $t+1$ in the model with eliminated period~$t$. For this
one needs the relaxation that $N_{s,t-1}$ is only $\mathcal{F}_t$-measurable in the model with $\wh{S}$). But, since $\wh{S}$ satisfies (NA), it follows that $u=t$, 
and $S$ also allows for a one-period arbitrage in $t$.
\end{proof}

\begin{proof}[Proof of Theorem~\ref{30.10.2017.1}]
Let $\wh{S}$ be some arbitrary nonnegative adapted price process satisfying (NA) in the model from Definition~\ref{17.8.2017.3} with the time domain~$\{0,\ldots,t-1,t+1,\ldots,T\}$   and let $\wt{S}$ be the associated process defined in (\ref{17.8.2017.1}) that acts on $\{0,\ldots,T\}$. We observe that
\beam\label{5.7.2018.1}
P\left(\wh{S}_{t-1}=0,\ \wh{S}_u>0\ \mbox{for some\ }u\in\{t+1,\ldots,T\}\right)=0.
\eeam
Furthermore, throughout the proof, we fix a strategy~$N$ in the market~$\wt{S}$ 
on the time domain~$\{0,\ldots,T\}$. We have to show that $N$ cannot be an arbitrage,
i.e., $\wt{S}$ also satisfies (NA).\\ 

{\em Step 1:} Define 
\beao
B_1:= \left\{ \wh{S}_{t-1}=0 \right\}\cup\left\{P\left(\frac{S_t-S_{t-1}}{S_{t-1}} \le (1-\alpha)r\ |\ \mathcal{F}_{t-1}\right)=1\right\}.
\eeao 
Let us first show that there exists a strategy $\wh{N}$ in the market~$\wh{S}$ on
$\{0,\ldots,t-1,t+1,\ldots,T\}$ (cf. Definition~\ref{17.8.2017.3}) such that $\wh{N}_{s,u} = N_{s,u}$ for all $s\le u\le t-2$ and
\beam\label{12.9.2017.1}
\wh{V}(\wh{N}) \ge \frac1{1+(1-\alpha)r}\wt{V}(N)\quad\mbox{on}\quad B_1\cup
\left\{\sum_{s=0}^{t-1}N_{s,t-1}=0\right\}\quad P\mbox{-a.s.},
\eeam
where $\wh{V}$ and $\wt{V}$ are defined as in (\ref{27.11.2017.2}) and  (\ref{27.11.2017.1}), respectively. We define $(\wh{N}_{s,u})_{s\le u,\ s,u\not=t}$ by
\beao 
\wh{N}_{s,u} & := & N_{s,u},\quad s\le t-2,\ u\not=t-1\\
\wh{N}_{s,t-1} & := & N_{s,t},\quad s\le t-2\\
\wh{N}_{t-1,t-1} & := & N_{t-1,t} +  \frac1{1+(1-\alpha)r}\frac{S_t}{S_{t-1}}N_{t,t}\\
\wh{N}_{t-1,u} & := & N_{t-1,u} +  \frac1{1+(1-\alpha)r}\frac{S_t}{S_{t-1}}N_{t,u},\quad u\ge t+1\\
\wh{N}_{s,u} & := & \frac1{1+(1-\alpha)r}\frac{S_t}{S_{t-1}}N_{s,u},\quad s\ge t+1.
\eeao
Note that $\wh{N}_{s,t-1}$ has only to be $\mathcal{F}_t$-measurable.
$\wh{N}_{t-1,t-1}$ is the number of stocks which are purchased at price~$\wh{S}_{t-1}$, i.e.,``between $t-1$ and $t+1$'' in the model with the smaller time domain. These purchases have to mimic the sum of purchases at price  $\wt{S}_{t-1}$ and $\wt{S}_t$ in the model with the larger time domain.

On the set $\left\{\sum_{s=0}^{t-1}N_{s,t-1}=0\right\}$, the stock positions in the market~$\wt{S}$ are completely liquidated at $t-1$
and no new shares are purchased at time~$t-1$. (\ref{18.9.2014.3}) yields 
that on this set one has $N_{s,u}=0$ for all $s\le t-1$, $u\ge t-1$, and thus, $\wt{V}(N)$ reduces to
\beam\label{23.6.2018.1}
\wt{V}(N) = \sum_{s=0}^{t-2}\sum_{u=s+1}^{t-1} (N_{s,u-1}-N_{s,u}) \wt{X}_{s,u} + \sum_{s=t}^{T-1}\sum_{u=s+1}^T (N_{s,u-1}-N_{s,u}) \wt{X}_{s,u}.
\eeam
By construction of $\wh{N}$, on the above set, one has
$\wh{N}_{s,u}=0$ for all $s\le t-2$, $u\ge t-1$, 
$\wh{N}_{t-1,t-1}=1/(1+(1-\alpha)r)S_t/S_{t-1} N_{t,t}$, and
$\wh{N}_{t-1,u}=1/(1+(1-\alpha)r)S_t/S_{t-1} N_{t,u}$ for all $u\ge t+1$.
This yields a similar simplification as in (\ref{23.6.2018.1}):
\beam\label{23.6.2018.2}
\wh{V}(\wh{N}) & = & \sum_{s=0}^{t-2}\sum_{u=s+1}^{t-1} (\wh{N}_{s,u-1}-\wh{N}_{s,u}) \wh{X}_{s,u} 
+ \sum_{u=t+1}^T \frac1{1+(1-\alpha)r}\frac{S_t}{S_{t-1}}(N_{t,u-1}-N_{t,u}) \wh{X}_{t-1,u}\nonumber\\
& & + \sum_{s=t+1}^{T-1} \sum_{u=s+1}^T (\wh{N}_{s,u-1}-\wh{N}_{s,u}) \wh{X}_{s,u}.
\eeam 
For $s\le t-2$, $u\le t-1$, one has $\wh{N}_{s,u-1}-\wh{N}_{s,u} = N_{s,u-1}-N_{s,u}$, $\wh{X}_{s,u}=\wt{X}_{s,u}/(1+(1-\alpha)r)$,  
and thus
$(\wh{N}_{s,u-1}-\wh{N}_{s,u})\wh{X}_{s,u}
=(N_{s,u-1}-N_{s,u})\wt{X}_{s,u}/(1+(1-\alpha)r)$. 
On the other hand, 
$\wh{N}$ yields the same gain in the submarket~$\wh{S}$ on $\{t-1,t+1,t+2\ldots,T\}$ 
as $N$ in the submarket~$\wt{S}$ on $\{t,t+1,t+2,\ldots,T\}$ up to the prefactor~$1/(1+(1-\alpha)r)$.  
Since both gains disappear on the set~$\{S_t=0\}$, we only have to check this assertion on the set~$\{S_t>0\}$, which coincides $P$-a.s. with $\{S_{t-1}>0,S_t>0\}$ by assumption. For $s\ge t+1$, $u>s$, we have $\wh{N}_{s,u-1}-\wh{N}_{s,u}  =
1/(1+(1-\alpha)r)S_t/S_{t-1}(N_{s,u-1}-N_{s,u})$, 
$\wh{X}_{s,u}=S_{t-1}/S_t \wt{X}_{s,u}$, and thus
$(\wh{N}_{s,u-1}-\wh{N}_{s,u})\wh{X}_{s,u}  =
1/(1+(1-\alpha)r) (N_{s,u-1}-N_{s,u})\wt{X}_{s,u}$. 
In addition, one has $\wh{X}_{t-1,u}=S_{t-1}/S_t\wt{X}_{t,u}$.
Putting together, the ``corresponding'' summands in (\ref{23.6.2018.1}) and (\ref{23.6.2018.2}) coincide up to the prefactor~$1/(1+(1-\alpha)r)$, which implies that (\ref{12.9.2017.1}) is satisfied with equality on $\left\{\sum_{s=0}^{t-1}N_{s,t-1}=0\right\}$.

Now, we analyze the gains on $B_1$, which is of course the interesting part. Without the assumption that $\sum_{s=0}^{t-1}N_{s,t-1}=0$, one
needs estimates for the gains~$\wt{X}_{s,u}$ with $s\le t-1$ and $u\ge t$.
We obtain that
\beam\label{30.11.2017.1}
\wt{X}_{s,u} & = & \left[(1-\alpha)\wt{S}_u + \alpha \wt{S}_s\right](1+(1-\alpha)r)^{T-u} - \wt{S}_s(1+(1-\alpha)r)^{T-s}\nonumber\\
& \le & \left[(1-\alpha)(1+(1-\alpha)r)\wh{S}_u + \alpha \wh{S}_s\right](1+(1-\alpha)r)^{T-u} - \wh{S}_s(1+(1-\alpha)r)^{T-s}\nonumber\\
& \le & (1+(1-\alpha)r)\left(\left[(1-\alpha)\wh{S}_u + \alpha \wh{S}_s\right](1+(1-\alpha)r)^{T-u} - \wh{S}_s(1+(1-\alpha)r)^{T-s-1}\right)\nonumber\\
& = & (1+(1-\alpha)r)\wh{X}_{s,u}\quad\mbox{$P$-a.s. on}\ B_1,\quad s\le t-1,\ u\ge t+1,
\eeam
in which for the estimate on the set $\{\wh{S}_{t-1}=0\}$ we use that
$\{\wh{S}_{t-1}=0\}\subset \{\wh{S}_u=0,\ \wt{S}_u=0\}$,\ $P$-a.s. by (\ref{5.7.2018.1}) and the construction of $\wt{S}$. With the same estimation, it follows
\beam\label{30.11.2017.2}
\wt{X}_{s,t} 
& \le & \left[(1-\alpha)(1+(1-\alpha)r)\wh{S}_{t-1} + \alpha \wh{S}_s\right](1+(1-\alpha)r)^{T-t} - \wh{S}_s(1+(1-\alpha)r)^{T-s}\nonumber\\
& \le & (1+(1-\alpha)r)\wh{X}_{s,t-1}\quad\mbox{$P$-a.s. on}\ B_1,\quad s\le t-2
\eeam
and
\beam\label{30.11.2017.3}
\wt{X}_{t-1,t}\le 0\quad\mbox{$P$-a.s. on}\ B_1.
\eeam 
Without the assumption that $\sum_{s=0}^{t-1}N_{s,t-1}=0$, the RHSs of 
(\ref{23.6.2018.1}) and (\ref{23.6.2018.2}) still coincide up to the prefactor~$1/(1+(1-\alpha)r)$, but there appear the additional summands 
\beam\label{6.7.2018.1}
\sum_{s=0}^{t-1}\sum_{u=t}^T (N_{s,u-1}-N_{s,u}) \wt{X}_{s,u}
\eeam
and
\beam\label{6.7.2018.2}
\sum_{s=0}^{t-1}\sum_{u=t+1}^T (N_{s,u-1}-N_{s,u}) \wh{X}_{s,u}
+  \sum_{s=0}^{t-2}(N_{s,t-1}-N_{s,t}) \wh{X}_{s,t-1}
\eeam
for $\wt{V}(N)$ and $\wh{V}(\wh{N})$, respectively. By (\ref{30.11.2017.1}) and (\ref{30.11.2017.2}), 
each summand in (\ref{6.7.2018.2}) dominates its
``corresponding'' summand in (\ref{6.7.2018.1}) up to the 
prefactor~$1/(1+(1-\alpha)r)$ on the set $B_1$. The summand 
$(N_{t-1,t-1}-N_{t-1,t}) \wt{X}_{t-1,t}$ in (\ref{6.7.2018.1}) is left,
but by (\ref{30.11.2017.3}) it is nonpositive on $B_1$, and we arrive at 
(\ref{12.9.2017.1}).

This means that if the {\em pre-tax} stock return in period $t$ certainly does not exceed $(1-\alpha)r$,  
an elimination of period~$t$ would always be desirable for the investor.
Note that an elimination means that she can defer taxes without being invested in period~$t$, and she need not pay interest on her debts in period~$t$.\\

{\em Step 2:} Define
\beao
B_2:= \left\{\wh{S}_{t-1}>0\right\}\cap\left\{S_{t-1}>0,\ P\left(\frac{S_t-S_{t-1}}{S_{t-1}} < \kappa_{t,T}\ |\ \mathcal{F}_{t-1}\right)>0\right\}.
\eeao 
Note that by (\ref{24.6.2018.1}) and the convention $0/0:=0$, we have that $\{S_{t-1}=0\}\subset B_1$,\ $P$-a.s. Thus, we get by assumption~(\ref{29.8.2017.1}) that 
\beam\label{5.7.2018.2}
P(B_1\cup B_2)=1.
\eeam

On $B_2$ the following decomposition plays a crucial role. For $s\le t-1$, $u\ge t$, we decompose $\wt{X}_{s,u}$
into four parts:
the gain when liquidating the stock at $t-1$, the gain after repurchasing the stock 
at time $t$, the
wealth generated by deferring the tax on gains accrued up to time~$t-1$ to 
time~$u$, and the profit in period $t$ taxed at time $u$. Formally, the decomposition is also defined for $s<u$ with $s>t-1$ or $u<t$, but then it degenerates, i.e., $I_3^{s,u}=I_4^{s,u}=0$.
The decomposition reads $\wt{X}_{s,u} = I_1^{s,u} + I_2^{s,u} + I_3^{s,u} + I_4^{s,u}$, where
\beao
I_1^{s,u} := \left[(1-\alpha)\wt{S}_{u\wedge (t-1)} + \alpha \wt{S}_{s\wedge(t-1)}\right]  
\left( 1+(1-\alpha)r\right)^{T-u\wedge(t-1)}
- \wt{S}_{s\wedge(t-1)}\left( 1+(1-\alpha)r\right)^{T-s\wedge(t-1)},
\eeao

\beao
I_2^{s,u} := \left[(1-\alpha)\wt{S}_{u\vee t} + \alpha \wt{S}_{s\vee t}\right]  
\left( 1+(1-\alpha)r\right)^{T-u\vee t}
- \wt{S}_{s\vee t}\left( 1+(1-\alpha)r\right)^{T-s\vee t},
\eeao

\beao
I_3^{s,u} := \alpha\left[\wt{S}_{u\wedge (t-1)} - \wt{S}_{s\wedge(t-1)}\right]  
\left[\left( 1+(1-\alpha)r\right)^{T-u\wedge(t-1)}
- \left( 1+(1-\alpha)r\right)^{T-u}\right],
\eeao

$I_4^{s,u}:=0$ for $s> t-1$ or $u< t$ and otherwise

\beao
I_4^{s,u} :=
\wt{S}_t\left( 1+(1-\alpha)r\right)^{T-t}
- \wt{S}_{t-1}\left( 1+(1-\alpha)r\right)^{T-(t-1)}
-\alpha(\wt{S}_t-\wt{S}_{t-1})\left( 1+(1-\alpha)r\right)^{T-u}.
\eeao

By $\wt{S}\ge 0$, we have that
\beam\label{14.1.2018.1}
& & I_3^{s,u} + I_4^{s,u}\\
& & \le \alpha \wt{S}_{t-1}   
\left[\left(1+(1-\alpha)r\right)^{T-(t-1)}
- (1+(1-\alpha)r)^{T-u}\right]\nonumber\\
& & + \wt{S}_t\left( 1+(1-\alpha)r\right)^{T-t}
- \wt{S}_{t-1}\left( 1+(1-\alpha)r\right)^{T-(t-1)}
-\alpha(\wt{S}_t-\wt{S}_{t-1})\left( 1+(1-\alpha)r\right)^{T-u}\nonumber
\eeam
for all $s\le t-1$, $u\ge t$.
By $\wt{S}_t\ge 0$, the RHS of (\ref{14.1.2018.1}) takes its maximum at $u=T$, which implies
\beam\label{30.8.2017.1}
& & I_3^{s,u} + I_4^{s,u}\\
& & \le I\nonumber\\
& & := \alpha \wt{S}_{t-1}   
\left[\left(1+(1-\alpha)r\right)^{T-(t-1)}
- 1\right]\nonumber\\
& & + \wt{S}_t\left( 1+(1-\alpha)r\right)^{T-t}
- \wt{S}_{t-1}\left( 1+(1-\alpha)r\right)^{T-(t-1)}
-\alpha(\wt{S}_t-\wt{S}_{t-1})\nonumber\\
& & \le \alpha \wt{S}_{t-1}  
\left[\left(1+(1-\alpha)r\right)^{T-(t-1)}
- 1\right] \nonumber\\
& &  + \wt{S}_{t-1}\left[(1+\kappa_{t,T})\left( 1+(1-\alpha)r\right)^{T-t}
- \left( 1+(1-\alpha)r\right)^{T-(t-1)}
-\alpha\kappa_{t,T}\right]\nonumber\\
& & < 0 \qquad\qquad\qquad\qquad
\quad\mbox{on}\ \left\{\wh{S}_{t-1}>0,\ S_{t-1}>0,\ \frac{S_t-S_{t-1}}{S_{t-1}} < \kappa_{t,T}\right\}\ 
\mbox{for}\ s\le t-1,\ u\ge t\nonumber.
\eeam
(\ref{30.8.2017.1}) can be seen as the key estimate of the proof. It implies that
on the event~$B_2\in\mathcal{F}_{t-1}$, there is the risk that the loss in period~$t$ dominates the benefit from deferring taxes over period~$t$. 
The estimate holds simultaneously in $s\in\{0,\ldots,t-1\}$ and $u\in\{t,\ldots,T\}$. 

Now define 
\beao
V_i := \sum_{(s,u),\ s<u} (N_{s,u-1}-N_{s,u})I_i^{s,u},\quad i=1,2,3,4.
\eeao
The terminal wealth $\wt{V}(N)$ is given by $\wt{V}(N) = V_1 + V_2 + V_3 + V_4$. 
First note that $V_1$ is $\mathcal{F}_{t-1}$-measurable, which can be seen by 
writing it as
\beao
V_1= \sum_{s,\ s<t-1}\left(\sum_{u,\ u>s,\ u<t-1}(N_{s,u-1}-N_{s,u})I_1^{s,u} 
+ N_{s,t-2}I_1^{s,t-1}\right).
\eeao
We also consider
\beao
W:= \sum_{(s,u),\ s\le t-1,\ u\ge t} (N_{s,u-1}-N_{s,u})I = I \sum_{s,\ s\le t-1} N_{s,t-1}. 
\eeao
By (\ref{30.8.2017.1}), one has $W\ge V_3+V_4$ everywhere, and  in contrast to $V_3+V_4$, $W$ is $\mathcal{F}_t$-measurable.\\ 

{Step 3:} Now, we prepare a case differentiation to complete the proof. Define
\beam\label{1.12.2017.2}
& & \wh{\mathcal{M}}^N := \left\{ A\in\mathcal{F}_{t-1}\ |\ \exists \wh{N}\ \mbox{on}\ \{0,\ldots,t-1,t+1,\ldots,T\}\ \mbox{such that}\right.\\  
& & \qquad\qquad\left. \wh{N}_{s,u} = N_{s,u}\ \mbox{$P$-a.s.}\ \forall  s\le u\le t-2\ \mbox{and}\ P(\wh{V}(\wh{N})\ge 0\ |\ \mathcal{F}_{t-1})=1\ \mbox{on}\ A\ P\mbox{-a.s.}\right\},\nonumber 
\eeam
\beao
\mathcal{M}^N := \left\{ A\in\mathcal{F}_{t-1}\ |\ \exists \wt{N}\ \mbox{on}\ \{0,\ldots,T\}\ \mbox{such that}\ 
\wt{N}_{s,u} = N_{s,u}\ \mbox{$P$-a.s.}\ \forall s\le u\le t-2\right.\\ 
\left. \mbox{and}\ P(\wt{V}(\wt{N})\ge 0\ |\ \mathcal{F}_{t-1})=1\ \mbox{on}\ A\ P\mbox{-a.s.}\right\}, 
\eeao
\beam\label{19.9.2017.1}
\wh{M}^N:={\rm ess sup}\ \wh{\mathcal{M}}^N\quad (\mbox{i.e.,}\ 1_{\wh{M}^N} 
= {\rm ess sup}\  \{1_A\ |\ A\in\wh{\mathcal{M}}^N\}),\quad
\mbox{and}\quad M^N:={\rm ess sup}\ \mathcal{M}^N
\eeam
Of course, the essential supremum of the family of 
functions~$\{1_A\ |\ A\in\wh{\mathcal{M}}^N\}$ is $\{0,1\}$-valued, which allows
the definitions~(\ref{19.9.2017.1}), cf., e.g., Remark~1.14 of \cite{he.wang.yan.1992}.

Let us show that the suprema in (\ref{19.9.2017.1}) are attained, i.e., 
\beam\label{1.12.2017.1}
\wh{M}^N\in \wh{\mathcal{M}}^N
\eeam
(and of course the same with $M^N$, although not needed). Indeed, by general properties of the essential supremum, there exists a sequence $(A_n)_{n\in\bbn}\subset \wh{\mathcal{M}}^N$ such that $\cup_{n\in\bbn} A_n=\wh{M}^N$\ $P$-a.s (cf., e.g., Remark~1.14 of \cite{he.wang.yan.1992}). Let $\wt{N}^{(n)}$ be corresponding strategies with
$\wt{N}^{(n)}_{s,u} = N_{s,u}$\ $P$-a.s. for all $s\le u\le t-2$  
and $P(\wt{V}(\wt{N}^{(n)})\ge 0\ |\ \mathcal{F}_{t-1})=1$ on $A_n$\ $P$-a.s. Now, 
one paste these strategies together by defining
$\wt{N}_{s,u}:=N_{s,u}$ for $u\le t-2$, 
$\wt{N}_{s,u} := \sum_{n=1}^\infty 1_{A_n\setminus(A_1\cup\ldots\cup A_{n-1})}N^{(n)}_{s,u}$ for $u\ge t-1$ (and of course, $s\le u$). This yields (\ref{1.12.2017.1}).\\

The set~$\wh{M}^N\in\mathcal{F}_{t-1}$ is the event that the strategy~$N$ ``before $t-1$'' can be extended to a strategy in the market~$\{0,\ldots,t-1,t+1,\ldots,T\}$ that does not make a loss. In arbitrage-free frictionless markets, this condition  would be equivalent to the non-negativity of the liquidation value at $t-1$. But, by the deferment of taxes, it can happen that a negative liquidation value becomes positive for sure with the passing of time. Note that in (\ref{1.12.2017.2}), one has $u\le t-2$
and not $u\le t-1$. This means that given the information~$\mathcal{F}_{t-1}$, trades
at time~$t-1$ can differ from $N$, e.g., all stock positions can be liquidated at $t-1$.\\

From (\ref{12.9.2017.1}) and $B_1\in\mathcal{F}_{t-1}$, it follows that
\beam\label{12.9.2017.2}
M^N \cap B_1 \subset \wh{M}^N \cap B_1\quad P\mbox{-a.s.}
\eeam

Now, we distinguish four cases, that may overlap, but include everything due to (\ref{5.7.2018.2}), to show that $N$ cannot be an 
arbitrage.\\

{\em Case 1:} $P\left(B_2\cap \left\{\sum_{s=0}^{t-1}N_{s,t-1}>0\right\}\right) =0$. 

By (\ref{5.7.2018.2}), one has $P\left(B_1\cup\left\{\sum_{s=0}^{t-1}N_{s,t-1}=0\right\}\right) =1$. Then, by (\ref{12.9.2017.1}),
there exists an $\wh{N}$ in the market~$\{0,\ldots,t-1,t+1,\ldots,T\}$
with $P(\wh{V}(\wh{N})\ge \wt{V}(N)/(1+(1-\alpha)r))=1$. Since the market~$\{0,\ldots,t-1,t+1,\ldots,T\}$
satisfies (NA), $N$ cannot be an arbitrage.\\

{\em Case 2:} $P\left(B_2\cap \left\{\sum_{s=0}^{t-1} N_{s,t-1}>0\right\}\right)>0$ and $P(\wh{M}^N)=1$.

By (\ref{1.12.2017.1}), there exists an $\wh{N}$ with 
$\wh{N}_{s,u} = N_{s,u}\ \mbox{$P$-a.s.}$ for all  $s\le u\le t-2$ and $P(\wh{V}(\wh{N})\ge 0)=1$. If the event~$\{V_1>0\}\in\mathcal{F}_{t-1}$ had a positive probability, the 
strategy~$\wh{N}'$ defined by $\wh{N}'_{s,u} := N_{s,u}$ for $u\le t-2$ and
$\wh{N}'_{s,u} := 1_{\{ V_1\le 0\}} \wh{N}_{s,u}$ for $u\ge t-1$ would be an arbitrage since $\wh{V}(\wh{N}') = 1_{\{ V_1 >0\}} V_1 + 1_{\{ V_1\le 0\}} \wh{V}(\wh{N})$. Thus, since the market~$\{0,\ldots,t-1,t+1,\ldots,T\}$
is arbitrage-free by assumption, we must have that $P(V_1\le 0)=1$. By $\{\sum_{s=0}^{t-1} N_{s,t-1}>0\}\in\mathcal{F}_{t-1}$ and (\ref{30.8.2017.1}), one has 
\beam\label{29.8.2017.2}
0 < P\left(\sum_{s=0}^{t-1} N_{s,t-1}>0,\ \wh{S}_{t-1}>0,\ S_{t-1}>0,\ \frac{S_t-S_{t-1}}{S_{t-1}} < \kappa_{t,T}\right) \le P(W<0).
\eeam
Since $\{W<0\}\in\mathcal{F}_t$, the random gain $1_{\{W<0\}}V_2$ can also be generated in the submarket with price process~$(\wh{S}_u)_{u=t-1,t+1,t+2,\ldots,T}$, 
in which initial purchases have only to be $\mathcal{F}_t$-measurable.
Since this submarket is arbitrage-free, (\ref{29.8.2017.2}) implies that 
$P(W+V_2<0)>0$ and thus $P(\wt{V}(N)<0)>0$.\\

{\em Case 3:} $P((\Omega\setminus \wh{M}^N)\cap B_2)>0$.

By choosing  $\wh{N}_{s,u} := N_{s,u}$ for $u\le t-2$ and
$\wh{N}_{s,u}:=0$ for $u\ge t-1$, it can be seen that $\{V_1\ge 0\}\in \wh{\mathcal{M}}^N$. Thus, $\Omega\setminus \wh{M}^N\subset \{V_1<0\}$ $P$-a.s.
and $P(\{V_1<0\}\cap B_2)>0$.
This together with $\{V_1<0\}\in\mathcal{F}_{t-1}$ and (\ref{30.8.2017.1})  
implies that
\beao
0 < P\left(V_1<0,\ \wh{S}_{t-1}>0,\ S_{t-1}>0,\  \frac{S_t-S_{t-1}}{S_{t-1}} < \kappa_{t,T}\right) 
\le P(V_1<0, W\le 0).
\eeao
Then, again by the $\mathcal{F}_t$-measurability of $V_1+W$ and the no-arbitrage
property of the  submarket with price process~$(\wh{S}_u)_{u=t-1,t+1,t+2,\ldots,T}$, 
one arrives at $P(V_1+W+V_2<0)>0$ and thus $P(\wt{V}(N)<0)>0$.

{\em Case 4:} $P((\Omega\setminus \wh{M}^N)\cap B_1)>0$.

By (\ref{12.9.2017.2}), one has $P(\Omega\setminus M^N)>0$.
On the other hand, by the maximality of $M^N$, one has $\Omega\setminus M^N \subset 
\{ P(\wt{V}(N)<0\ |\ \mathcal{F}_{t-1}) >0 \}$\ $P$-a.s.
Putting together, we arrive at $P(\wt{V}(N)<0)>0$.
\end{proof}

\begin{proof}[Proof of Proposition \ref{19.5.2017.1}]
Assume that (\ref{24.6.2018.1}) and (\ref{18.5.2017.1}) hold.\\ 

{\em Step 1:} Define $A_s:= \{S_t\le (1+(1-\alpha)r)S_{t-1},\ \forall 
t=s+1,\ldots,T\}$. Let us show that 
\beam\label{18.5.2017.2}
P(A_s\ |\ \mathcal{F}_s)>0\quad P\mbox{-a.s.}
\eeam
by backward-induction in $s=T-1,T-2,\ldots,0$. For $s=T-1$, the assertion is already included in
(\ref{18.5.2017.1}). $s\leadsto s-1$: We have 
$A_{s-1}=A_s\cap\{S_s\le (1+(1-\alpha)r)S_{s-1}\}$. Let $C\in\mathcal{F}_{s-1}$ with $P(C)>0$. By (\ref{18.5.2017.1}), this implies that 
\beam\label{28.11.2017.1}
& & P(C\cap \{S_s\le (1+(1-\alpha)r)S_{s-1}\})\nonumber\\
& & = E\left(E\left(1_C 1_{\{S_s\le (1+(1-\alpha)r)S_{s-1}\}}\ |\ \mathcal{F}_{s-1}\right)\right)\nonumber\\
& & = E\left(1_C P(S_s\le (1+(1-\alpha)r)S_{s-1}\ |\ \mathcal{F}_{s-1})\right)
>0.
\eeam
Together with $C\cap \{S_s\le (1+(1-\alpha)r)S_{s-1}\}\in\mathcal{F}_s$ and
the induction hypothesis, (\ref{28.11.2017.1}) implies that 
\beao
P(C\cap A_{s-1})=E(1_{C\cap\{S_s\le (1+(1-\alpha)r)S_{s-1}\}}P(A_s\ |\ \mathcal{F}_s))>0,
\eeao
and we are done.\\

Let $s<t$. On $A_s$, the liquidation value at time~$t$ of a stock purchased at time~$s$ satisfies  
\beao
S_t - \alpha(S_t-S_s) & \le & S_s\left( 1+(1-\alpha)r\right)^{t-s} 
- \alpha \left( (1+(1-\alpha)r)^{t-s}-1\right)S_s\nonumber\\ 
& \le & S_s\left( 1+(1-\alpha)r\right)^{t-s} 
- \alpha(1-\alpha)rS_s,
\eeao
and thus
\beam\label{18.5.2017.3}
X_{s,t}\le -\alpha(1-\alpha)rS_s(1+(1-\alpha)r)^{T-t} < 0\quad\mbox{on}\ A_s\cap\{S_s>0\}\ \mbox{for all\ }t\ge s+1,
\eeam
where $X$ is defined in (\ref{28.11.2017.2}). On the other hand, by
(\ref{24.6.2018.1}), one has
\beam\label{24.6.2018.2}
X_{s,t}=0\quad\mbox{on $\{S_s=0\}$\ $P$-a.s. for all $t\ge s+1$}.
\eeam

{\em Step 2:} Now, let $N$ be some arbitrary strategy in the stock with liquidation value $V(N)$ from (\ref{27.11.2017.1}). Define the stopping time 
\beao
\tau:=\inf\{s\ge 0\ |\ N_{s,s}>0\ \mbox{and}\ S_s>0\}\wedge T.
\eeao 

{\em Case 1:} $P(\tau=T)=1$. Either the strategy does not trade at all or only at a vanishing stock price. To see this, define 
$\tau':=\inf\{s\ge 0\ |\ N_{s,s}>0\}\wedge T$. We have $S_{\tau'}=0$, $P$-a.s. on $\{\tau'<T\}$. By (\ref{24.6.2018.1}), this implies that $S_t=0$ for all $t=\tau',\tau'+1,\ldots,T$, $P$-a.s. on $\{\tau'<T\}$. Thus, $N$
satisfies $(N_{s,u-1}-N_{s,u})X_{s,u}=0$ for all $s=0,\ldots,T-1,\ u=s+1,\ldots,T$, $P$-a.s. and cannot be an arbitrage.\\

{\em Case 2:} $P(\tau=T)<1$. (\ref{18.5.2017.2}) implies that
\beao
P(A)>0,\quad \mbox{where\ }A:=\{\tau< T\}\cap\{S_t\le (1+(1-\alpha)r)S_{t-1}\quad\forall t=\tau+1\ldots,T\}.
\eeao
Note that $\{\tau< T\}\subset\{N_{\tau,\tau}>0,\ S_\tau>0\}$. By (\ref{18.5.2017.3}), we get $X_{\tau(\omega),t}(\omega)<0$ for all $t\ge \tau(\omega)+1$ and $\omega\in A$. Together with $N_{\tau,t-1}-N_{\tau,t} \ge 0$ for all $t\ge \tau + 1$ and
$\sum_{t=\tau+1}^T (N_{\tau,t-1}-N_{\tau,t})=N_{\tau,\tau}$, this implies that
\beam\label{24.6.2018.3}
\sum_{t=\tau(\omega)+1}^T (N_{\tau(\omega),t-1}(\omega)-N_{\tau(\omega),t}(\omega))X_{\tau(\omega),t}(\omega)<0\quad\mbox{for all\ }\omega\in A.
\eeam
On the other hand, we have that for all pairs~$(s,t)$ with $s<t$
\beam\label{6.11.2017.1}
(N_{s,t-1}-N_{s,t})X_{s,t}\le 0\quad\mbox{on\ }A,\quad P\mbox{-a.s.}
\eeam
Indeed, by (\ref{24.6.2018.2}), it remains to consider the case that 
$S_s(\omega)>0$. If in addition $N_{s,t-1}(\omega)-N_{s,t}(\omega)>0$, then 
$\tau(\omega)\le s$ and (\ref{6.11.2017.1}) follows from (\ref{18.5.2017.3}).
From (\ref{24.6.2018.3}) and (\ref{6.11.2017.1}), one obtains $V(N)<0$ on $A$. Thus, $N$ cannot be an arbitrage.
\end{proof} 

\section{(Counter-)Examples}\label{29.12.2017.3}

In the examples, we have $\alpha\in(0,1)$, $\mathcal{F}=2^\Omega$, and all states have a positive probability. In addition, the following simple observations prove useful in many places. 
\begin{Notiz}\label{26.6.2018.5}

\begin{itemize}
\item[(i)] Let $R\in\bbr_+$ and the real number $\bar{r}$ is given by
\beam\label{27.6.2018.1}
(1+R)(1+\bar{r})(1-\alpha) + \alpha = \left[(1+R)(1-\alpha)+\alpha\right](1+(1-\alpha)r).
\eeam
Then, one has $\bar{r}\in ((1-\alpha)r,r]$, where $\bar{r}=r$ iff $R=0$, and for every $R'>R$, 
\beam\label{27.6.2018.2}
(1+R')(1+\bar{r})(1-\alpha) + \alpha > \left[(1+R')(1-\alpha)+\alpha\right](1+(1-\alpha)r).
\eeam
\item[(ii)] Let $n\in\bbn_0$ and the real number $R$ is given by
\beam\label{26.6.2018.1}
(1+R)(1-\alpha) + \alpha = (1+(1-\alpha)r)^n.
\eeam
Then, there exists an $\bar{r}\in\bbr_+$ with
\beam\label{26.6.2018.2}
(1+R)(1+\bar{r})(1-\alpha) + \alpha < (1+(1-\alpha)r)^{n+1},
\eeam
but 
\beam\label{26.6.2018.3}
(1+R)(1+\bar{r})^2(1-\alpha) + \alpha > (1+(1-\alpha)r)^{n+2}.
\eeam
\item[(iii)] Let $r_1\in\bbr_+$ and $m_1,m_2\in\bbn$ with $m_1\le m_2$. We have the implication
\beao
(1+r_1)^{m_1}(1-\alpha) + \alpha \ge (1+(1-\alpha)r)^{m_1} \implies  
(1+r_1)^{m_2}(1-\alpha) + \alpha \ge (1+(1-\alpha)r)^{m_2}.
\eeao
\end{itemize}
\end{Notiz}
\begin{proof}
Ad (i): From (\ref{27.6.2018.1}), it follows that $\bar{r}>(1-\alpha)r$. 
Since, the difference of the LHSs of (\ref{27.6.2018.2}) and (\ref{27.6.2018.1}) reads $(R'-R)(1+\bar{r})(1-\alpha)$, and the difference of the RHSs is given by $(R'-R)(1-\alpha)(1+(1-\alpha)r)$, one arrives at (\ref{27.6.2018.2}).
 
Ad (ii): Let $R$ be given by (\ref{26.6.2018.1}) and define $\bar{r}$ through {\em equality} in (\ref{26.6.2018.2}). This implies that 
(\ref{27.6.2018.1}) is satisfied. Applied to $R'$ given by $1+R'=(1+R)(1+\bar{r})$, assertion~(i) yields (\ref{26.6.2018.3}).
The assertion follows by choosing $\bar{r}$ slightly smaller such that (\ref{26.6.2018.3}) still holds.

Ad (iii): Let $m_0:=\inf\left\{m\in\bbn\ |\ (1+r_1)^m(1-\alpha) + \alpha \ge (1+(1-\alpha)r)^m\right\}$. The infimum is finite iff $r_1>(1-\alpha)r$. We can assume this since otherwise there is nothing to show. One has
\beao
(1+r_1)^{m_0-1}(1+r_1)(1-\alpha) + \alpha & = & (1+r_1)^{m_0}(1-\alpha) + \alpha\\ 
& \ge & (1+(1-\alpha)r)^{m_0}\\
 & \ge & \left[(1+r_1)^{m_0-1}(1-\alpha) + \alpha\right](1+(1-\alpha)r),
\eeao
which implies that for $R=(1+r_1)^{m_0-1}-1$ the corresponding $\bar{r}$ from (\ref{27.6.2018.1}) satisfies 
\beam\label{20.9.2018.1}
\bar{r}\le r_1.
\eeam
Now, we are in the position to show by induction that  
\beao
(1+r_1)^{m_0+k}(1-\alpha) + \alpha \ge (1+(1-\alpha)r)^{m_0+k},\quad \forall k\in\bbn_0,
\eeao
which completes the proof. Assume that the assertion holds for some $k\in\bbn_0$. We derive that
\beao
(1+r_1)^{m_0+k+1}(1-\alpha) + \alpha & = & (1+r_1)^{m_0+k}(1+r_1)(1-\alpha) 
+ \alpha\\
& \ge & (1+r_1)^{m_0+k}(1+\bar{r})(1-\alpha) + \alpha\\
& > & \left[(1+r_1)^{m_0+k}(1-\alpha) + \alpha\right](1+(1-\alpha)r)\\
& \ge & (1+(1-\alpha)r)^{m_0+k+1}.
\eeao
Here, the first inequality holds by (\ref{20.9.2018.1}), the second by part~(i) applied to
$R=(1+r_1)^{m_0-1}-1$ and $R'=(1+r_1)^{m_0+k}-1$, and the third by the induction hypothesis.
\end{proof}
Note~\ref{26.6.2018.5} allows for the following economic interpretation. Consider a stock position
whose ratio between the unrealized book profit and the pre-tax value reads
$R/(1+R)$. Then, the number~$\bar{r}$ in 
(\ref{27.6.2018.1}) is the minimal deterministic return in the next period
such that it is worthwhile to hold the
stock for one more period instead of liquidating it immediately.
For $R'>R$, the ratio~$R'/(1+R')$ is larger than $R/(1+R)$, and the above break-even point for the stock return of the next period decreases.\\

The first example of this section is about a boundary~$\kappa<r$ larger than $\kappa_{t,T}$ from (\ref{30.10.2017.4}) s.t $P\left((S_t-S_{t-1})/S_{t-1} < \kappa\ |\ \mathcal{F}_{t-1}\right)>0$\ $P$-a.s., but (RLNA) in $t$ does not hold. This means that the risk 
of a loss larger than $-\kappa S_{t-1}$ does not make it impossible that 
a long stock position in period~$t$ 
triggers an arbitrage. 
 
\begin{thmBSP}[On the maximality of $\kappa_{t,T}$]\label{9.11.2017.2}
Let $t,T\in \bbn$  with $2\le t\le T-1$ and $\Omega=\{\omega_1,\omega_2\}$. The outcome~$\omega$ is revealed at time~$t$, i.e., $\mathcal{F}_u=\{\emptyset,\Omega\}$ for $u\le t-1$ and $\mathcal{F}_u=2^\Omega$ for $u\ge t$. 
Let $\kappa$ be a boundary satisfying
\beam\label{3.12.2017.6}
\kappa > \frac{(1-\alpha) \left[ (1+(1-\alpha)r)^T - \alpha\right]}{\left[ (1+(1-\alpha)r)^{t-1} - \alpha\right]\left[ (1+(1-\alpha)r)^{T-t} - \alpha\right]} -1.
\eeam 
The RHS of (\ref{3.12.2017.6}) tends to $\kappa_{t,T}$ for $t,T\to\infty$ and $T-t$ fixed.
In the following, we construct  a stochastic stock return with $P((S_t - S_{t-1})/S_{t-1}< \kappa\ |\ \mathcal{F}_{t-1}) >0$, but $S$ does not satisfy (RLNA) in $t$. 

We assume that 
$S_t(\omega_1)=S_{t-1}(\omega_1)(1+\un{r}_2)$ and
$S_t(\omega_2)=S_{t-1}(\omega_2)(1+\ov{r}_2)$ with parameters $\un{r}_2<\ov{r}_2$ that still have to be specified.
To show that $S$ does not satisfy (RLNA) in $t$, one has to find a process $\wh{S}$ 
which satisfies (NA) in the model from Definition~\ref{17.8.2017.3} with the time domain $\{0,\ldots,t-1,t+1,\ldots,T\}$ such that the corresponding $\wt{S}$ from (\ref{17.8.2017.1}) allows for an arbitrage. We consider 
\beao
\wh{S}_u(\omega) 
:= \left\{ \begin{array}{l@{\ :\ }l}    
                         (1+r_1)^u   & u\le t-1\\    
                          (1+r_1)^{t-1}(1+\ov{r}_3)^{u-t} & u\ge t+1,\ \omega=\omega_1\\
                          (1+r_1)^{t-1} & u\ge t+1,\ \omega=\omega_2
                         \end{array}\right.     
\eeao
where the parameters $r_1,\ov{r}_3 >0$  are also not yet specified. The associated process $\wt{S}$ reads 
\beao
\wt{S}_u(\omega) 
= \left\{ \begin{array}{l@{\ :\ }l}    
                         (1+r_1)^u   & u\le t-1\\    
                          (1+r_1)^{t-1}(1+\un{r}_2)(1+\ov{r}_3)^{u-t} & u\ge t,\ \omega=\omega_1\\
                          (1+r_1)^{t-1}(1+\ov{r}_2) & u\ge t,\ \omega=\omega_2
                         \end{array}\right.     
\eeao
In the following, we state four conditions 
from which we then show that they can be satisfied simultaneously
by a suitable choice of the parameters $r_1,\un{r}_2,\ov{r}_2$, and $\ov{r}_3$.
Later on, we show that $\wh{S}$ satisfies (NA) and $\wt{S}$ allows for an arbitrage under these conditions that read: 
\beam\label{3.12.2017.1}
(1+r_1)^{t-1}(1-\alpha) + \alpha < (1+(1-\alpha)r)^{t-1}\qquad
\mbox{no-arbitrage up to $t-1$,}
\eeam
\beam\label{3.12.2017.2}
(1+\ov{r}_3)^{T-t}(1-\alpha) + \alpha < (1+(1-\alpha)r)^{T-t}\qquad\mbox{no-arbitrage after $\omega$ is revealed,}
\eeam
\beam\label{3.12.2017.4}
& & (1+r_1)^{t-1}(1+\un{r}_2)(1+\ov{r}_3)^{T-t}(1-\alpha) + \alpha\nonumber\\
& &  > (1+(1-\alpha)r)^T\qquad\mbox{profit by buy-at-$0$-and-sell-at-$T$ if $\omega_1$ occurs,}
\eeam
and
\beam\label{3.12.2017.5}
& & (1+r_1)^{t-1}(1+\ov{r}_2)(1-\alpha) + \alpha\nonumber\\
& &  > (1+(1-\alpha)r)^t\qquad\mbox{profit by buy-at-$0$-and-sell-at-$t$ if $\omega_2$ occurs}.
\eeam
By (\ref{3.12.2017.6}), there exists an $\un{r}_2<\kappa$ with
\beam\label{20.9.2018.2}
(1+\un{r}_2)\left[ (1+(1-\alpha)r)^{t-1} - \alpha\right]\left[ (1+(1-\alpha)r)^{T-t} - \alpha\right]  
> (1-\alpha) \left[ (1+(1-\alpha)r)^T - \alpha\right].
\eeam
Fixing such an $\un{r}_2$, one can find $r_1$ and $\ov{r}_3$ such that
(\ref{3.12.2017.1}), (\ref{3.12.2017.2}), and (\ref{3.12.2017.4}) hold simultaneously. Indeed, if $r_1$ and $\ov{r}_3$ were defined through equality in
(\ref{3.12.2017.1}) and (\ref{3.12.2017.2}), respectively, then (\ref{3.12.2017.4})
would be just a reformulation of (\ref{20.9.2018.2}). Now, one chooses $r_1$ and $\ov{r}_3$ slightly smaller s.t. (\ref{3.12.2017.4}) still holds. 
 
After these parameters are already specified, by $r_1>-1$,
one can choose $\ov{r}_2$ large enough such that (\ref{3.12.2017.5}) holds. 
Putting, together inequalities 
(\ref{3.12.2017.1})-(\ref{3.12.2017.5}) can be satisfied simultaneously.\\

Let us show that $\wh{S}$ satisfies (NA). We consider the gains defined in
(\ref{22.12.2017.1}). By (\ref{3.12.2017.1}) and Note~\ref{26.6.2018.5}(iii), one has 
\beam\label{21.9.2018.1}
(1+r_1)^n(1-\alpha) + \alpha < (1+(1-\alpha)r)^n,\quad \forall n\in\bbn\ \mbox{with}\ n\le t-1.
\eeam
Since the stock's return vanishes on $\{\omega_2\}$ after $t-1$, we get
\beam\label{1.7.2018.3}
\wh{X}_{s,u}(\omega_2) & = & \left[(1+r_1)^{u\wedge(t-1)-s\wedge(t-1)}(1-\alpha)+\alpha
- (1+(1-\alpha)r)^{u-s-1_{(s\le t-1<u)}}\right]\nonumber\\
& & \times \ (1+(1-\alpha)r)^{T-u-1_{(u\le t-1)}} <0,\quad s,u\not=t,\ s<u,
\eeam
where the inequality follows from (\ref{21.9.2018.1}).
On the other hand, by (\ref{3.12.2017.2})
and again Note~\ref{26.6.2018.5}(iii), one has that 
\beam\label{1.7.2018.4}
\wh{X}_{s,u}(\omega_1) & = &
\left[(1+\ov{r}_3)^{u-s\vee t}(1-\alpha)+\alpha
- (1+(1-\alpha)r)^{u-s\vee t}\right]\nonumber\\
& & \times\  (1+(1-\alpha)r)^{T-u}<0,\quad t-1\le s<u.
\eeam
Now, let $N$ be some arbitrary strategy in the market~$\{0,\ldots,t-1,t+1,\ldots,T\}$ with $\wh{V}(N)\ge 0$, cf. (\ref{27.11.2017.2}).
From (\ref{1.7.2018.3}) it follows that $N_{s,u-1}(\omega_2)-N_{s,u}(\omega_2)=0$ for all $s<u$ and thus
$N_{s,s}(\omega_2) = \sum_{u=s+1}^T(N_{s,u-1}(\omega_2)-N_{s,u}(\omega_2))=0$.
Since $\mathcal{F}_{t-2}$ is trivial, this implies that $N_{s,s}=0$ for all $s\le t-2$. In addition, $N_{s,s}=0$ for all $s\ge t-1$ by 
(\ref{1.7.2018.4})/(\ref{1.7.2018.3}). Thus, $\wh{V}(N)$ from (\ref{27.11.2017.2}) vanishes and $\wh{S}$ satisfies (NA).

On the other hand, in the model with price process~$\wt{S}$, the strategy 
\beao
N_{s,u}(\omega) = 1_{( \omega=\omega_1,\ s=0,\ u\le T-1)} + 1_{( \omega=\omega_2,\ s=0,\ u\le t-1)},
\eeao
leading to 
\beao
\wt{V}(N)= \wt{X}_{0,T}1_{\{\omega_1\}} + \wt{X}_{0,t}1_{\{\omega_2\}},
\eeao
 is an arbitrage. Namely, one has $\wt{X}_{0,T}(\omega_1)>0$ and $\wt{X}_{0,t}(\omega_2)>0$ by (\ref{3.12.2017.4}) and (\ref{3.12.2017.5}), respectively. Putting together, $S$ does not satisfy (RLNA) in $t$.
\end{thmBSP} 

Example~\ref{9.11.2017.2} is based on two features: first, $\un{r}_2$, the bad return in period~$t$, is compensated by the deferment of taxes across period~$t$; 
and second, there exists the chance of a good return~$\ov{r}_2$ that makes a purchase before~$t-1$ profitable even if the stock has to be liquidated after period~$t$. 
This return is missing in the model with price process~$\wh{S}$ on the event that
the stock has to be liquidated after period~$t$.
Put differently, the stochastic return in period~$t$ can be used to hedge against the bad performance of the stock afterwards. 

On the other hand, if the period~$t$ return {\em never} exceeded $(1-\alpha)r$, the addition of period~$t$ would not provide any advantage.
Especially one cannot construct  deterministic examples for the same boundaries as in (\ref{3.12.2017.6}).

\begin{thmBSP}[Two long positions in the same stock that hedge each other]\label{9.11.2017.4}
Let $T=3$, $\Omega=\{\omega_1,\omega_2\}$, $\mathcal{F}_0=\mathcal{F}_1=\{\emptyset,\Omega\}$, and $\mathcal{F}_2=\mathcal{F}_3=2^\Omega$, i.e., 
$\omega$ is revealed at time~$2$. 
We consider the following stock price with parameters $\bar{r},\eps_1,\eps_2>0$ that
still have to be specified: $S_0=1$, $S_1=1+\bar{r}$, 
\beao
S_2(\omega_1) = (1+\bar{r})(1+r-\eps_1),
\quad
S_3(\omega_1) = (1+\bar{r})(1+r-\eps_1)(1+\bar{r})
\eeao
and
\beao
S_2(\omega_2) = (1+\bar{r})(1+r+\eps_2),\quad S_3(\omega_2)=0.
\eeao

The return~$\bar{r}\in (0,r)$ has to satisfy 
\beam\label{12.11.2017.1}
(1+\bar{r})(1+r)(1-\alpha) + \alpha < (1+(1-\alpha)r)^2,
\eeam
but
\beam\label{12.11.2017.2}
(1+\bar{r})^2(1+r)(1-\alpha) + \alpha > (1+(1-\alpha)r)^3.
\eeam
By Note~\ref{26.6.2018.5}(ii), the two conditions on $\bar{r}$ can be satisfied simultaneously. 
From now on, we fix such an $\bar{r}$. Note that the gain of the short-term investment in period~2 reads
\beam\label{4.7.2018.1}
X_{1,2}(\omega) 
= \left\{ \begin{array}{l@{\ :\ }l}    
                         -(1-\alpha)(1+(1-\alpha)r)(1+\bar{r})\eps_1 & \quad\mbox{for\ }\omega=\omega_1\\
                         (1-\alpha)(1+(1-\alpha)r)(1+\bar{r})\eps_2 & \quad\mbox{for\ }\omega=\omega_2 
                         \end{array}\right.     
\eeam
We proceed by choosing $\eps_1,\eps_2>0$ both ``small'' such that
\beam\label{9.11.2017.1}
X_{1,2}(\omega_1)X_{0,2}(\omega_2) - X_{1,2}(\omega_2) X_{0,3}(\omega_1) = 0,
\eeam
\beam\label{12.2.2018.1}
X_{0,3}(\omega_1) = (1+\bar{r})(1+r-\eps_1)(1+\bar{r})(1-\alpha) + \alpha - (1+(1-\alpha)r)^3 > 0,
\eeam
and
\beam\label{12.2.2018.2}
X_{0,2}(\omega_2) = \left[(1+\bar{r})(1+r+\eps_2)(1-\alpha) + \alpha\right] 
(1+(1-\alpha)r) - (1+(1-\alpha)r)^3 < 0.
\eeam
To achieve this, define $F(\eps_1,\eps_2)$ as the LHS of (\ref{9.11.2017.1}). One has 
\beao
\partial_2 F(0,0) = -(1-\alpha)(1+\bar{r})(1+(1-\alpha)r)
\left[(1+\bar{r})^2(1+r)(1-\alpha) + \alpha - (1+(1-\alpha)r)^3\right] < 0.
\eeao
Thus, the implicit function theorem yields the existence of a 
function~$f$ defined in a neighborhood of $0$ with $F(\eps_1,f(\eps_1))=0$
for all $\eps_1>0$ small enough.  
We choose $\eps_1>0$ small enough and $\eps_2=f(\eps_1)$. By (\ref{12.11.2017.1})
and (\ref{12.11.2017.2}), the conditions (\ref{12.2.2018.2}), (\ref{12.2.2018.1}), and (\ref{9.11.2017.1}) can be satisfied simultaneously.\\

We have constructed a model with two states and two self-financing and opposite investment opportunities: a long-term investment that
buys the stock at time~$0$ and sells it at time~$3$
if $\omega_1$ occurs or at time~$2$ if $\omega_2$ occurs and a short-term investment
that buys the stock at time~$1$ and sells it at time~2. All other investments lead to sure losses. Equation~(\ref{9.11.2017.1}) is a no-arbitrage condition ensuring that one of the two investment opportunities is redundant. Consider the strategy
\beam\label{29.12.2017.1}
N_{0,0}=N_{0,1}=1,\ N_{0,2}=1_{\{\omega_1\}},\quad N_{1,1}=-X_{0,3}(\omega_1)/X_{1,2}(\omega_1),\ N_{1,2}=0,\ \mbox{and}\ N_{2,2}=0,
\eeam
in which $X_{0,3}(\omega_1)>0$ and $-X_{1,2}(\omega_1)>0$ by (\ref{12.2.2018.1}) and (\ref{4.7.2018.1}), respectively. The liquidation value reads 
\beao
V(N) = X_{0,3}1_{\{\omega_1\}} + X_{0,2}1_{\{\omega_2\}} + (- X_{0,3}(\omega_1)/X_{1,2}(\omega_1))X_{1,2},
\eeao 
which vanishes by (\ref{9.11.2017.1}). When buying one stock at time zero, $- X_{0,3}(\omega_1)/X_{1,2}(\omega_1)$ stocks at time one and following the liquidation rules from above, the after-tax gains of the two long stock positions cancel each other out, and the total gain after financing costs 
disappears for sure.\\

Finally, we give a detailed proof that the model satisfies (NA). For stocks purchased at time~$0$, we have that
\beam\label{1.7.2018.01}
\max_{u=1,2,3}X_{0,u}(\omega_1) = X_{0,3}(\omega_1)>0\quad\mbox{and}\quad \max_{u=1,2,3}X_{0,u}(\omega_2) = X_{0,2}(\omega_2)<0.
\eeam
Indeed, the former holds since $X_{0,2}(\omega_1)<0$ by (\ref{12.11.2017.1}) and 
$X_{0,3}(\omega_1)>0$ by
(\ref{12.2.2018.1}). The latter holds by 
$X_{0,3}(\omega_2)<X_{0,1}(\omega_2)<X_{0,2}(\omega_2)$ and (\ref{12.2.2018.2}).
For stocks purchased at time~$1$, we get
\beam\label{1.7.2018.02}
\max_{u=2,3}X_{1,u}(\omega_1) = X_{1,2}(\omega_1)<0\quad\mbox{and}\quad \max_{u=2,3}X_{1,u}(\omega_2) = X_{1,2}(\omega_2)>0.
\eeam
Here, $X_{1,2}(\omega_1)<0$ holds by (\ref{4.7.2018.1}) and $X_{1,2}(\omega_1)>X_{1,3}(\omega_1)$ follows from (\ref{12.11.2017.1}), $-\eps_1<0$, and Note~\ref{26.6.2018.5}(i). On the other hand, $X_{1,2}(\omega_2)>0$ by (\ref{4.7.2018.1}) and $X_{1,3}(\omega_2)<0$.\\

Now, let $N$ be some arbitrary strategy. By (\ref{1.7.2018.01}) and $\sum_{u=1}^3(N_{0,u-1}-N_{0,u}) = N_{0,0}$, we have the estimate
\beao
\sum_{u=1}^3 (N_{0,u-1}-N_{0,u})X_{0,u}
\le   N_{0,0} \left( X_{0,3}1_{\{\omega_1\}} + X_{0,2}1_{\{\omega_2\}}\right).
\eeao
With (\ref{1.7.2018.02}), we have an analogue estimate for the gains of purchases at time~$1$ and since $X_{2,3}<0$, we arrive at
\beao
V(N) = \sum_{s=0}^2 \sum_{u=s+1}^3 (N_{s,u-1}-N_{s,u})X_{s,u}
\le   
N_{0,0}\left( X_{0,3}1_{\{\omega_1\}} + X_{0,2}1_{\{\omega_2\}}\right) + N_{1,1}X_{1,2}.
\eeao
One has
\beam\label{30.6.2018.1}
V(N) & \le  & N_{0,0}\left( X_{0,3}1_{\{\omega_1\}} + X_{0,2}1_{\{\omega_2\}}\right) + N_{1,1}X_{1,2}\nonumber\\
& = & N_{0,0} X_{1,2}\frac{X_{0,3}(\omega_1)}{X_{1,2}(\omega_1)}  + N_{1,1}X_{1,2}\nonumber\\
& = & \left[N_{0,0}\frac{X_{0,3}(\omega_1)}{X_{1,2}(\omega_1)} + N_{1,1}\right]X_{1,2},
\eeam
where the first equality holds by (\ref{9.11.2017.1}). Now, assume that
$V(N)\ge 0$. Since $X_{1,2}$ can take both a positive and a negative value, (\ref{30.6.2018.1}) implies that the deterministic prefactor vanishes, i.e., $N_{0,0}X_{0,3}(\omega_1)/X_{1,2}(\omega_1) + N_{1,1}=0$ and thus $V(N)=0$.
Consequently, the market satisfies (NA).
\end{thmBSP}

We have constructed an arbitrage-free model with two long positions in the same stock that hedge each other. The key feature of Example~\ref{9.11.2017.4} is that a bad return in period~$2$ is followed by a good return in period~$3$ and vice versa. For a short-term investor, it is only worthwhile to speculate in the return of period $2$ (thus, the advance information about a good return in period~$3$ cannot be used for an arbitrage). By contrast, for a long-term investor who already buys the stock at time zero and accepts lower returns since she profits from the deferment of taxes, the return of period~$3$ has a stronger impact. 
Thus, a long-term investor believing in this stochastic model hopes for the bad return in period~$2$.

Such an example cannot exist in an arbitrage-free model without taxes because the difference between long- and short-term investments disappears. This is shown in the following remark.
\begin{Bemerkung}\label{7.1.2018.1}
Let $(\un{S},\ov{S})$ be the discounted discrete time bid-ask-price process of a risky stock in a model without capital gains taxes. If the market enriched by a riskless bank account is arbitrage-free, there exists a $Q\sim P$ and a 
$Q$-martingale $S$ with $\un{S}\le S\le \ov{S}$ (for the result on general probability spaces, see Corollary~2.9 in 
Grigoriev~\cite{grigoriev.2005}). Then, for any dynamic strategy with vanishing liquidation value, the gain process~$\varphi\mal S$ in the discounted shadow price also vanishes, where $\varphi$ denotes the predictable number of risky stocks. Indeed, since trading at price process~$S$ is at least as favorable as at the bid-ask prices, one has $Q(\varphi\mal S_T\ge 0)=1$. On the other hand, a discrete time local martingale with nonnegative terminal value
is a true martingale. Together with $\varphi\mal S_0=0$, one arrives at
$Q(\varphi\mal S_t=0,\ t=0,\ldots,T)=1$. However, this implies that each individual share bought and sold by the dynamic strategy between $0$ and $T$ makes zero profit for sure (after financing costs), and thus an effect as in Example~\ref{9.11.2017.4} cannot occur.
\end{Bemerkung}

An extension of the construction 
in Example~\ref{9.11.2017.4} is also useful in the following example. 
Here, we want to show that on an infinite probability space, no-arbitrage alone does not imply the existence of a separating measure since the set of attainable terminal wealths does not need to be closed 
regarding the convergence in probability.

\begin{thmBSP}[No-arbitrage $\nRightarrow$ $\exists$ equivalent separating  probability measure]\label{3.1.2018.1}
Let $T=4$ and $\Omega=\{\omega_{n,1}\ |\ n\in\bbn\}\cup  \{\omega_{n,2}\ |\ n\in\bbn\}$, 
$\mathcal{F}_0 = \mathcal{F}_1 = \{\emptyset,\Omega\}$, 
$\mathcal{F}_2 = \sigma(\{\{\omega_{n,1}, \omega_{n,2}\}\ |\ n\in\bbn\})$, and $\mathcal{F}_3=\mathcal{F}_4=2^\Omega$. This means that $n$ is already revealed at time~$2$ and full information at time~$3$. The stock price depends on parameters that still have to be specified and reads $S_0=1$, $S_1=1+r_1$, $S_2=(1+r_1)(1+r)$,
\beao
S_3(\omega_{n,1}) = (1+r_1)(1+r)(1+r_2-\eps_{n,1}),
\quad
S_4(\omega_{n,1}) = (1+r_1)(1+r)(1+r_2-\eps_{n,1})(1+r_1)
\eeao
and
\beao
S_3(\omega_{n,2}) = (1+r_1)(1+r)(1+r_2+\eps_{n,2}),\quad S_4(\omega_{n,2})=0,
\eeao
where $r_2\in (0,r)$ is given by 
\beao
(1+r)(1+r_2)(1-\alpha) + \alpha = (1+(1-\alpha)r)^2.
\eeao 
In addition, we fix an 
$r_1>0$ satisfying 
\beam\label{10.11.2017.1}
 (1+r_1)(1+r)(1+r_2)(1-\alpha) + \alpha < (1+(1-\alpha)r)^3,
\eeam
but
\beam\label{10.11.2017.2}
(1+r_1)(1+r)(1+r_2)(1+r_1)(1-\alpha) + \alpha > (1+(1-\alpha)r)^4.
\eeam
By Note~\ref{26.6.2018.5}(ii), such an $r_1$ exists. Observe that the difference
between the LHS and the RHS of (\ref{10.11.2017.2}) corresponds to $X_{0,4}(\omega_{n,1})$  if $\eps_{n,1}$ is ignored. The same holds for 
(\ref{10.11.2017.1}) and $X_{0,3}(\omega_{n,i})/(1+(1-\alpha)r)$ if $\eps_{n,i}$,\ $i=1,2$, is ignored.

Note~\ref{26.6.2018.5}(i) applied to $R=r$ (which means that $\bar{r}=r_2$) and $R'=(1+r)(1+r_2)-1$
yields that
$(1+r)(1+r_2)^2(1-\alpha) + \alpha > (1+(1-\alpha)r)^3$ and thus by (\ref{10.11.2017.1})
\beam\label{r_1r_2}
r_1<r_2.
\eeam
We proceed by specifying the sequence~$(\eps_{n,2})_{n\in\bbn}\subset\bbr_+\setminus\{0\}$. Let $\eps_2>0$ be small enough such that $(1+r_1)(1+r)(1+r_2+\eps_2)(1-\alpha) + \alpha < (1+(1-\alpha)r)^3$ and $r_2 + \eps_2 <r$, which exists by (\ref{10.11.2017.1}) and $r_2<r$. Then, define 
$\eps_{n,2}:= (1/n)\wedge \eps_2$ for all $n\in\bbn$. This choice ensures that
\beam\label{28.6.2018.01}
\eps_{n,2}\downarrow 0,\ n\to\infty,\quad \sup_{n\in\bbn}X_{0,3}(\omega_{n,2})<0,\quad X_{1,3}(\omega_{n,2})>0,\quad X_{2,3}(\omega_{n,2})<0,\ \forall n\in\bbn.
\eeam
Note that
\beao
X_{1,3}(\omega) 
= \left\{ \begin{array}{l@{\ :\ }l}    
                         -(1+r_1)(1+r)(1-\alpha)(1+(1-\alpha)r)\eps_{n,1} & \quad\mbox{for\ }\omega=\omega_{n,1}\\
                         (1+r_1)(1+r)(1-\alpha)(1+(1-\alpha)r)\eps_{n,2} & \quad\mbox{for\ }\omega=\omega_{n,2} 
                         \end{array}\right.     
\eeao
To complete the construction of the stock price process, it remains to specify the sequence~$(\eps_{n,1})_{n\in\bbn}\subset\bbr_+\setminus\{0\}$.
The goal is that the conditions
\beam\label{28.6.2018.3}
X_{0,4}(\omega_{n,1})>0,\quad X_{1,3}(\omega_{n,1})<0,\ \forall n\in\bbn,
\eeam
and
\beam\label{29.12.2017.5}
X_{1,3}(\omega_{n,2}) X_{0,4}(\omega_{n,1})
-2 X_{1,3}(\omega_{n,1})X_{0,3}(\omega_{n,2}) \ge 0,\quad \forall n\in\bbn,
\eeam
are satisfied. To achieve this, we firstly choose an $\eps_1>0$ satisfying $(1+r_1)(1+r)(1+r_2-\eps_1)(1+r_1)(1-\alpha) + \alpha > (1+(1-\alpha)r)^4$, which exists by (\ref{10.11.2017.2}). Then, we consider the functions $f_n$ given by
\beam\label{28.6.2018.2}
f_n(\eps) & := & X_{1,3}(\omega_{n,2})\left[(1+r_1)(1+r)(1+r_2-\eps)(1+r_1)(1-\alpha) + \alpha -(1+(1-\alpha)r)^4\right]\nonumber\\
& & +2 X_{0,3}(\omega_{n,2})(1+r_1)(1+r)(1-\alpha)(1+(1-\alpha)r)\eps,\qquad n\in\bbn.
\eeam
$f_n(\eps)$ is the LHS of (\ref{29.12.2017.5}) if the number~$\eps_{n,1}$ entering in $X_{0,4}(\omega_{n,1})$ and $X_{1,3}(\omega_{n,1})$ is replaced by the variable~$\eps$. For every $n\in\bbn$, the factor after $X_{1,3}(\omega_{n,2})>0$  in (\ref{28.6.2018.2}) converges to a positive number by (\ref{10.11.2017.2}) and the factor after $X_{0,3}(\omega_{n,2})$ to zero for $\eps\downarrow 0$. Thus, one has $f_n(\eps)>0$ for $\eps>0$ small enough. Consequently,
\beao
\eps_{n,1} := \sup\{1/k\ |\ k\in\bbn,\  1/k < \eps_1,\ \mbox{and}\ f_n(1/k) > 0\}
\eeao
is positive and the condition (\ref{29.12.2017.5}) is satisfied.
By the choice of $\eps_1$ and $\eps_{n,1}\le\eps_1$, one has $X_{0,4}(\omega_{n,1})>0$ for all $n\in\bbn$. In addition, one has $X_{1,3}(\omega_{n,1})<0$ for all $n\in\bbn$ and thus (\ref{28.6.2018.3}).\\

Let us explain the main idea of the construction. As in Example~\ref{9.11.2017.4},
there is a long- and a short-term investment in the stock 
that are in the opposite direction. 
For the long-term investment, the stock is already purchased at time~$0$.
The short-term investment is between $1$ and $3$, but the return between $1$ and $2$ equals the riskless interest rate~$r$. It simply provides the opportunity to defer taxes beyond time~$2$ if the stock is still held in the portfolio, but without a loss if one decides against it at time~$2$. 
The long-term investor hopes for the event~$\{\omega_{n,1}\ |\ n\in\bbn\}$ and the short-term investor for $\{\omega_{n,2}\ |\ n\in\bbn\}$. In contrast to Example~\ref{9.11.2017.4}, the investments do not only cancel each other out, but one can achieve a systematic profit if one holds a suitable ratio.
The problem is that the profit and the loss of the short-term investment 
disappear with $n\to\infty$. Thus, depending on $n$, one needs to buy more and more stocks at time $1$, but $n$ is not revealed before time~$2$. On the other hand, by buying the stock at time~2 and doing without the benefits from the deferment of gains accrued in period~2, one makes a loss for sure.
Without the knowledge of $n$, stocks can be bought ahead at time~$1$ and -- if they are not needed -- sold without any loss at time~$2$. This procedure generates  an approximate arbitrage by buying ahead more and more stocks at time $1$ and selling the stocks which are not needed at time~$2$. The remaining risk that there are not enough short-term stocks in the portfolio at time~$2$ disappears in the limit.\\

{\em Step 1:} Let us first show that the model satisfies (NA). For stock purchases at time~$0$, we have that 
\beam\label{27.6.2018.3}
\max_{u=1,2,3,4}X_{0,u}(\omega_{n,1})=X_{0,4}(\omega_{n,1}),\quad
\max_{u=1,2,3,4}X_{0,u}(\omega_{n,2})=X_{0,3}(\omega_{n,2}),\qquad
 \forall n\in\bbn. 
\eeam
Indeed, $X_{0,4}(\omega_{n,1})>0$ holds by (\ref{28.6.2018.3}), but 
$X_{0,1}<X_{0,2}<0$ by (\ref{r_1r_2}) and $X_{0,3}(\omega_{n,1})<0$ by
(\ref{10.11.2017.1}). For $\omega_{n,2}$, we have that
$X_{0,4}(\omega_{n,2}) < X_{0,1} < X_{0,2}<X_{0,3}(\omega_{n,2})$.
Here, the last inequality follows from $\eps_{n,2}>0$ and Note~\ref{26.6.2018.5}(i) applied to $R=r$ and $R'=(1+r_1)(1+r)-1$.

For stock purchases at time~$1$, we get 
\beam\label{27.6.2018.4}
X_{1,2}=0,\quad X_{1,3} > X_{1,4},
\eeam
in which $X_{1,3}(\omega_{n,1})>X_{1,4}(\omega_{n,1})$ follows from (\ref{10.11.2017.1}), $-\eps_{n,1}<0$, and Note~\ref{26.6.2018.5}(i).
Later purchases lead to sure losses, i.e.,
\beam\label{27.6.2018.5}
 X_{2,3}<0,\quad X_{2,4}<0,\quad\mbox{and}\quad  X_{3,4}<0,
 \eeam
 where $X_{2,3}(\omega_{n,2})<0$ is ensured by (\ref{28.6.2018.01}).
 
Now, let $N$ be some arbitrary strategy. By (\ref{27.6.2018.3}) and $\sum_{u=1}^4(N_{0,u-1}-N_{0,u})=N_{0,0}$, we have the estimate
\beao
\sum_{u=1}^4(N_{0,u-1}-N_{0,u})X_{0,u} 
\le 
N_{0,0}\left(X_{0,4}1_A + X_{0,3}1_{\Omega\setminus A}\right),\quad
\mbox{where\ }A:=\{\omega_{n,1}\ |\ n\in\bbn\}.
\eeao 
By (\ref{27.6.2018.4}), we get the estimate
\beao
\sum_{u=2}^4(N_{1,u-1}-N_{1,u})X_{1,u} 
= \sum_{u=3}^4(N_{1,u-1}-N_{1,u})X_{1,u}
\le 
N_{1,2} X_{1,3}. 
\eeao 
Finally, by (\ref{27.6.2018.5}), one has
$\sum_{u=3}^4(N_{2,u-1}-N_{2,u})X_{2,u}\le 0$ and $N_{3,3}X_{3,4}\le 0$. Putting together, 
\beam\label{13.11.2017.1}
V(N) =\sum_{s=0}^3\sum_{u=s+1}^4 (N_{s,u-1}-N_{s,u})X_{s,u} \le N_{0,0}\left(X_{0,4}1_A + X_{0,3}1_{\Omega\setminus A}\right) + N_{1,2} X_{1,3}.
\eeam
Now, assume that $V(N)\ge 0$ and let us show that this implies $V(N)=0$. Since $N_{1,2}$ is bounded from above by the real number $N_{1,1}$,
$X_{1,3}(\omega_{n,2})=(1+r_1)(1+r)(1-\alpha)(1+(1-\alpha)r)\eps_{n,2}\to 0$ for $n\to\infty$ by (\ref{28.6.2018.01}), and $\sup_{n\in\bbn}X_{0,3}(\omega_{n,2})<0$, we conclude from the non-negativity of the RHS of (\ref{13.11.2017.1}) on $\Omega\setminus A$ that $N_{0,0}=0$. From $X_{1,3}(\omega_{n,1})<0$ for all $n\in\bbn$ and the non-negativity of the RHS of (\ref{13.11.2017.1}) on $A$, it follows that the $\mathcal{F}_2$-measurable nonnegative random variable~$N_{1,2}$ vanishes.
We arrive at $V(N)\le 0$, which means that there cannot be an arbitrage.\\

{\em Step 2:} Consider the sequence of strategies~$(N^m)_{m\in\bbn}$ given by
\beao
& & N^m_{0,0}=N^m_{0,1}=N^m_{0,2}=1,\ N^m_{0,3} = 1_A,\\
& & N^m_{1,1}=m,\ N^m_{1,2}(\omega_{n,i}) 
= \left(-X_{0,4}(\omega_{n,1})/X_{1,3}(\omega_{n,1})\right)\wedge m,\ n\in\bbn,\ i=1,2,\ N^m_{1,3}=0,
\eeao
and $N_{s,u}=0$ for $s\ge 2$. By $X_{1,2}=0$,
(\ref{13.11.2017.1}) holds with equality for $N^m$.
For $\omega=\omega_{n,1}$, the long- and short-term investments cancel each other out if $m\ge |X_{0,4}(\omega_{n,1})/X_{1,3}(\omega_{n,1})|$, and the profit in the long-term investment~$X_{0,4}(\omega_{n,1})>0$ dominates if $m< |X_{0,4}(\omega_{n,1})/X_{1,3}(\omega_{n,1})|$. This means that $V(N^m)1_A = N^m_{0,0} X_{0,4}1_A + N^m_{1,2} X_{1,3} 1_A \ge 0$.

For $\omega=\omega_{n,2}$ and if  $m\ge |X_{0,4}(\omega_{n,1})/X_{1,3}(\omega_{n,1})|$, we get 
\beao
V(N^m)(\omega_{n,2}) & = & X_{0,3}(\omega_{n,2}) - \frac{X_{0,4}(\omega_{n,1})}{X_{1,3}(\omega_{n,1})}X_{1,3}(\omega_{n,2})\\
& \ge & \frac12 \frac{X_{0,4}(\omega_{n,1})}{X_{1,3}(\omega_{n,1})}X_{1,3}(\omega_{n,2}) -\frac{X_{0,4}(\omega_{n,1})}{X_{1,3}(\omega_{n,1})}X_{1,3}(\omega_{n,2})\\
& = & -\frac12 \frac{X_{0,4}(\omega_{n,1})}{X_{1,3}(\omega_{n,1})}X_{1,3}(\omega_{n,2})>0,
\eeao
where the first inequality holds by by (\ref{29.12.2017.5}).
For $\omega=\omega_{n,2}$ and if  $m< |X_{0,4}(\omega_{n,1})/X_{1,3}(\omega_{n,1})|$,
one has by $X_{1,3}(\omega_{n,2})>0$ the trivial estimate
$V(N^m)(\omega_{n,2}) \ge X_{0,3}(\omega_{n,2})$, that does however not ensure non-negativity. Putting together, one obtains that
\beam\label{11.2.2018.1}
V(N^m) & \ge & V(N^m) 1_{\Omega\setminus A}\nonumber\\
& \ge & 
-\frac12 \sum_{n=1}^\infty \frac{X_{0,4}(\omega_{n,1})}{X_{1,3}(\omega_{n,1})} X_{1,3}(\omega_{n,2}) 1_{(m\ge |X_{0,4}(\omega_{n,1})/X_{1,3}(\omega_{n,1})|)}1_{\{\omega_{n,2}\}}\nonumber\\
& & +  \sum_{n=1}^\infty X_{0,3}(\omega_{n,2}) 1_{(m< |X_{0,4}(\omega_{n,1})/X_{1,3}(\omega_{n,1})|)}1_{\{\omega_{n,2}\}}.
\eeam
The RHS of (\ref{11.2.2018.1}) converges pointwise to 
\beam\label{24.11.2018.2}
-\frac12 \sum_{n=1}^\infty \frac{X_{0,4}(\omega_{n,1})}{X_{1,3}(\omega_{n,1})} X_{1,3}(\omega_{n,2}) 1_{\{\omega_{n,2}\}}
\in L^0_+(\Omega,\mathcal{F},P)\setminus\{0\}
\eeam
for $m\to\infty$. This means that there exists an approximate arbitrage.
Putting together, for
$\mathcal{A}:=\{\eta_T\in L^0(\Omega,\mathcal{F},P)\ |\ (\eta,N)\ \mbox{self-financing}\} - L^0_+(\Omega,\mathcal{F},P)$,
one has that 
\beam\label{26.1.2018.1}
\mathcal{A}\cap L^0_+(\Omega,\mathcal{F},P)=\{0\},\quad\mbox{but}\quad
\ov{\mathcal{A}}\cap L^0_+(\Omega,\mathcal{F},P)\supsetneq\{0\},
\eeam
where $\ov{\mathcal{A}}$ denotes the closure of $\mathcal{A}$ regarding the convergence in probability. 

A minimal condition on a separating probability measure~$Q\sim P$
is that
\beam\label{24.11.2018.1}
E_Q(\zeta)\le 0,\quad\forall \zeta\in \mathcal{A}\cap L^\infty(\Omega,\mathcal{F},P).
\eeam
Since the sequence~$(V(N^m))_{m\in\bbn}$ is uniformly bounded from below by 
\beao
\inf_{n\in\bbn}X_{0,3}(\omega_{n,2})
=  \left[(1+r_1)(1+r)(1+r_2)(1-\alpha) + \alpha - (1+(1-\alpha)r)^3\right](1+(1-\alpha)r) > -\infty,
\eeao
it immediately follows from (\ref{24.11.2018.2}) and Fatou's lemma that (\ref{24.11.2018.1}) cannot hold.
\end{thmBSP}

The construction in Example~\ref{3.1.2018.1} can also be used to establish a proportional transaction costs model with 3 assets which satisfies (NA), 
although an equivalent separating probability  measure 
does not exist. This means that Grigoriev's theorem cannot be extended to dimension~3 (see Theorem~1.2 and the discussion in Section~5 of 
\cite{grigoriev.2005}). 

\begin{Beispiel}[Counterexample for an extension of Grigoriev's theorem
to dimension~3]\label{26.1.2018.2}
Let $T=2$, 
$\Omega=\{\omega_{n,1}\ |\ n\in\bbn\}\cup  \{\omega_{n,2}\ |\ n\in\bbn\}$, 
$\mathcal{F}_0 = \{\emptyset,\Omega\}$, 
$\mathcal{F}_1 = \sigma(\{\{\omega_{n,1}, \omega_{n,2}\}\ |\ n\in\bbn\})$, and $\mathcal{F}_2=2^\Omega$. This means that $n$ is already revealed at time~$1$ and full information at time~$2$. Besides a bank account that does not pay interest, there are two risky stocks 
with the bid-ask-price processes~$(\un{S}^1,\ov{S}^1)$ and 
$(\un{S}^2,\ov{S}^2)$. 
The initial prices satisfy $\un{S}^1_0=\ov{S}^1_0=\un{S}^2_0=\ov{S}^2_0=1$. 
At time~$1$, stock~2 can be sold at price~1, i.e., $\un{S}^2_1=1$, and other prices are sufficiently unfavorable to avoid trades, namely $\un{S}^1_1=0$, $\ov{S}^1_1=3$, and $\ov{S}^2_1=3$.
The terminal prices are
\beao
& & \un{S}^1_2(\omega_{n,1})=\ov{S}^1_2(\omega_{n,1})=3,\quad \un{S}^1_2(\omega_{n,2})=\ov{S}^1_2(\omega_{n,2})=0,\\
& & 
\un{S}^2_2(\omega_{n,1})=\ov{S}^2_2(\omega_{n,1})=1-1/n,\quad
\un{S}^2_2(\omega_{n,2})=\ov{S}^2_2(\omega_{n,2}) =1+1/n,\ n\in\bbn.
\eeao   
This means that at times~$0$ and $2$ we have a frictionless market and at time~$1$
one can only sell the stock~$2$ (at the same price as at time~$0$, but after observing ~$n$). 

{\em Step 1:} Let us show (NA). It is sufficient to consider strategies that are static apart from possible sells of stock~2 at time~1. The liquidation values dominate those  of
other strategies, and for zero initial capital they can be written as
\beam\label{30.1.2018.1}
V:= \sum_{n=1}^\infty \left( 2\vp^1_1 -\frac1{n}\left(\vp^2_1 + \Delta \vp^2_2\right)
\right) 1_{\{\omega_{n,1}\}}
+ \sum_{n=1}^\infty \left( -\vp^1_1 +\frac1{n}\left(\vp^2_1 + \Delta \vp^2_2\right)
\right) 1_{\{\omega_{n,2}\}},
 \eeam
where $\vp^1_1$, $\vp^2_1$ are arbitrary real numbers and $\Delta \vp^2_2$ is a non-positive $\mathcal{F}_1$-measurable random variable.
Following the standard notation is discrete time finance, the quantities denote the amounts of stocks in the portfolio. 
Assume that $V\ge 0$. 
By $\Delta \vp^2_2\le 0$ and $1/n\to 0$ for $n\to\infty$, the non-negativity of the second sum in (\ref{30.1.2018.1}) implies $\vp^1_1\le 0$. On the other hand, one has $\vp^1_1 \le \left(\vp^2_1 + \Delta \vp^2_2\right)/n\le 2\vp^1_1$. Putting together, we obtain $\vp^1_1=0$ and thus $\vp^2_1 + \Delta \vp^2_2=0$ as well as $V=0$.
This means that the market satisfies (NA).

{\em Step 2:} On the other hand, an ``approximate arbitrage'' is given by the 
sequence~$(\vp^{1,m}_1,\vp^{2,m}_1,\Delta \vp^{2,m}_2)_{m\in\bbn}$ with 
$\vp^{1,m}_1=1$, $\vp^{2,m}_1=m$, and $\Delta \vp^{2,m}_2=-(m-n)^+$ on $\{\omega_{n,1},\omega_{n,2}\}$.  The corresponding liquidation values read 
$\sum_{n=1}^\infty \left( 2 -\frac{n\wedge m}{n}\right) 1_{\{\omega_{n,1}\}}
+ \sum_{n=1}^\infty \left( -1 +\frac{n\wedge m}{n}\right)1_{\{\omega_{n,2}\}}$,\ $m\in\bbn$, which converges pointwise to $1_{\{\omega_{n,1}\ |\ n\in\bbn\}}$ for $m\to\infty$.
This yields (\ref{26.1.2018.1}). 

By Proposition~3.2.6 of Kabanov and Safarian~\cite{kabanov.safarian.2009}, $\ov{\mathcal{A}}\cap L^0_+(\Omega,\mathcal{F},P)\not=\{0\}$ already implies that a so-called consistent price system cannot 
exist (this means that one need not make use of the fact that the above sequence leads to bounded losses and thus Fatou's lemma is applicable). 
\end{Beispiel}
Example~3.1 of Schachermayer~\cite{schachermayer.2004} demonstrates the same property in a one-period model with $4$ assets. The main idea of our example is the same as in
\cite{schachermayer.2004}. Roughly speaking, there is an asset for which the need is initially unknown and unbounded, and after the need is revealed, it can be sold at its initial purchasing price. Thus, one buys it ahead more and more, and the limit of this strategy does not exist.
On the other hand, the ``currency example'' in \cite{schachermayer.2004} is based on the property that no asset can play the role of a bank account, i.e., of an asset that is involved in every transaction.  Indeed, in {\em one-period} transaction costs models with
a bank account, Grigoriev's theorem holds for arbitrary many assets; namely, these models can be written as frictionless markets with short-selling constraints, and thus the set of attainable liquidation values is closed regarding the convergence in probability. This is shown by Napp~\cite{napp.2003} 
(see Lemma~3.1 and Corollary 4.2 therein).

\begin{Bemerkung}[Wash sales]\label{3.1.2018.2}
From a theoretical perspective, it is interesting to observe that (NA) in the sense of Definition~\ref{22.11.2017.1} does not follow from (NA) in the same model but under the prohibition of wash sales. Indeed, in the spirit of Example~\ref{9.11.2017.4}, one can construct an example with $|\Omega|=4$ 
in which the long-term investment and two one-period  investments in two subsequent periods (with all $4$ combinations of good/bad one-period returns) can be combined to an arbitrage. 
In the first of these periods, the bad return has to become negative 
(and not only smaller than $r$). Through enlarging of the time horizon, this can be achieved without violating the other properties. Subsequently, the arbitrage may disappear if a wash sell between the two periods is forbidden, i.e., one cannot realize the loss in the first period and invest in the second period. We leave the construction of the example as an exercise for the interested reader. 

Nevertheless, for the arbitrage theory with a single non-shortable risky stock, the effect
of the prohibition of wash sales is marginal. This can be seen in Example~\ref{9.11.2017.2},
in which the arbitrage strategy that exists under the larger lower bound for the return in period~$t$ does not make use of wash sales. 
\end{Bemerkung}

\section{Relation to models with proportional transaction costs}\label{29.12.2017.4}

It is interesting to note that the model of Dybvig and Koo~\cite{dyb1}
can be written as a model with proportional transaction costs by introducing several fictitious securities; namely, for every $i=0,\ldots,T-1$, we consider a security~$i$ that can be bought at
the ask-price process~$\ov{S}^i$ and sold at the bid-price process~$\un{S}^i$. In fact, security~$i$ models the purchase of the original stock at time~$i$ and a later liquidation. This means that we put
\beam\label{19.5.2017.6}
\ov{S}^i_i := S_i,\quad \un{S}^i_t := S_t - \alpha(S_t-S_i),\  t=i+1,\ldots,T,
\quad \mbox{and}\ \ov{S}^i=\infty,\ \un{S}^i=-\infty\ \mbox{otherwise.}
\eeam
Note that a short position in security~$i$ cannot be closed after time~$i$.
This forces the investor to hold only nonnegative numbers of the securities.

\begin{Proposition}\label{6.11.2017.2}
Let the assumptions of Proposition~\ref{19.5.2017.1} be satisfied.

(i) The set of attainable terminal wealths $\mathcal{A}:=\{\eta_T\in L^0(\Omega,\mathcal{F},P)\ |\ (\eta,N)\ \mbox{self-financing}\} - L^0_+(\Omega,\mathcal{F},P)$ is closed regarding the convergence in probability.

(ii) There exists a $Q\sim P$ with bounded $dQ/dP$ such that $E_Q(S_i)<\infty$ for all $i=0,1,\ldots,T$ and 
\beam\label{19.5.2017.4}
E_Q\left(\frac{S_{\tau} - \alpha(S_{\tau}-S_i)}{(1+(1-\alpha)r)^{\tau}}
\right) 
\le E_Q\left(\frac{S_i}{(1+(1-\alpha)r)^i}\right)\quad \forall i=0,\ldots,T-1,\ \tau\in\mathcal{T}_i,
\eeam
where $\mathcal{T}_i$ denotes the set of $\{i,\ldots,T\}$-valued stopping times.
\end{Proposition}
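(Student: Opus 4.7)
The plan is to prove part~(i) first via a Kabanov--Stricker / Schachermayer-style closedness argument adapted to the multi-asset transaction costs framework introduced in~(\ref{19.5.2017.6}), and then to deduce part~(ii) by applying a Kreps--Yan-type separation to the closed convex cone~$\mathcal{A}$.

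For~(i), I would rewrite the liquidation value as $V(N)=\sum_{s<u}\Delta N_{s,u}X_{s,u}$ with $\Delta N_{s,u}:=N_{s,u-1}-N_{s,u}\in L^0_+(\mathcal{F}_u)$ subject to $\sum_{u>s}\Delta N_{s,u}=N_{s,s}\in L^0_+(\mathcal{F}_s)$ for each $s$. Given a sequence $V(N^n)-\xi^n\to\eta$ in probability with $\xi^n\in L^0_+$, set $c_n:=1+\sum_{s=0}^{T-1} N^n_{s,s}$ and distinguish two cases. On $\{\liminf c_n<\infty\}$, a Kabanov--Stricker exhaustion together with a Koml\'os-type extraction yields a subsequence along which all $\Delta N^n_{s,u}$ converge a.s.\ to nonnegative $\mathcal{F}_u$-measurable limits whose partial sums over $u$ remain $\mathcal{F}_s$-measurable; assembling these produces a self-financing strategy $N^\infty$ with $V(N^\infty)\ge\eta$. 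On the complementary set $\{c_n\to\infty\}$ (of positive probability in the contrary case), normalize by $c_n$ and extract a subsequential limit $\wh{N}$ satisfying $V(\wh{N})\ge 0$ and $\sum_s\wh{N}_{s,s}=1$ on this set, contradicting the (NA) property guaranteed by Proposition~\ref{19.5.2017.1}.

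For~(ii), first fix an auxiliary probability measure $P_0\sim P$ with bounded density $dP_0/dP$, e.g.\ $dP_0/dP\propto (1+\sum_{i=0}^T S_i)^{-1}$ suitably normalized, so that $E_{P_0}[S_i]<\infty$ for every $i$. By part~(i), $\mathcal{A}$ is closed in $L^0(P_0)$, convex, contains $-L^0_+$, and satisfies $\mathcal{A}\cap L^0_+=\{0\}$ by (NA). The Kreps--Yan theorem applied to the closed convex cone $\mathcal{A}\cap L^1(P_0)$ then yields $Q\sim P_0\sim P$ with $dQ/dP_0$ bounded, so that $dQ/dP=(dQ/dP_0)(dP_0/dP)$ is bounded and $E_Q[S_i]\le\|dQ/dP_0\|_\infty E_{P_0}[S_i]<\infty$, and $E_Q[\zeta]\le 0$ for all $\zeta\in\mathcal{A}\cap L^1(Q)$. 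Inequality~(\ref{19.5.2017.4}) is then obtained by applying the separating property to the self-financing ``buy one share at time~$i$, sell at stopping time~$\tau\in\mathcal{T}_i$'' strategy, whose terminal wealth equals $X_{i,\tau}$ from~(\ref{28.11.2017.2}); taking $E_Q$ of $X_{i,\tau}\le 0$ and dividing by $(1+(1-\alpha)r)^T$ yields the claim (with the convention that $X_{i,i}:=0$ on $\{\tau=i\}$, where both sides coincide trivially).

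The main obstacle is the closedness argument in~(i), particularly the unbounded case where the normalized limit strategy $\wh{N}$ must simultaneously respect the intricate adaptedness structure ($N_{s,u}\in L^0_+(\mathcal{F}_u)$ but with $N_{s,s}\in L^0_+(\mathcal{F}_s)$) and the monotonicity $N_{s,u-1}\ge N_{s,u}$ in~$u$, so that the resulting limit is an admissible strategy. Crucially, the stronger local condition~(\ref{18.5.2017.1}) beyond mere (NA) is indispensable here: Example~\ref{3.1.2018.1} exhibits a model satisfying (NA) in which $\mathcal{A}$ fails to be closed in probability, so (\ref{18.5.2017.1}) must do genuine work in blocking the accumulation of approximate arbitrages through infinitely many ``hedging pairs'' of the kind constructed there, most likely by providing the uniform supermartingale-type control on $V(N^n)$ needed for the Koml\'os extraction to succeed.
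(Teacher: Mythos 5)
Part~(ii) of your proposal is essentially the paper's argument: change to an equivalent measure with bounded density under which all $S_i$ are integrable, apply Kreps--Yan to the closed cone, and test the separating functional on the buy-at-$i$, sell-at-$\tau$ strategies whose terminal wealth is $X_{i,\tau}$. No issues there, granted part~(i).

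For part~(i) you take a genuinely different route from the paper, and it is exactly there that your write-up has a gap. The paper does not run the Kabanov--Stricker/Koml\'os machinery by hand; it identifies the model with the multi-asset transaction costs market~(\ref{19.5.2017.6}) and verifies Schachermayer's \emph{robust no-arbitrage} condition by perturbing the bid-ask prices by an amount of order $\tfrac13\alpha(1-\alpha)rS_i$ and checking that the perturbed model is still arbitrage-free --- which works precisely because estimate~(\ref{18.5.2017.3}) gives a \emph{strict} loss of at least $\alpha(1-\alpha)rS_s(1+(1-\alpha)r)^{T-t}$ per share purchased at a positive price, on the positive-probability event $A_s$. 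Closedness then follows by citing Theorem~2.1 of \cite{schachermayer.2004}. Your direct argument founders at the decisive step of the unbounded case: you claim that a normalized limit $\wh{N}$ with $V(\wh{N})\ge 0$ and $\sum_s\wh{N}_{s,s}=1$ contradicts ``the (NA) property guaranteed by Proposition~\ref{19.5.2017.1}''. It does not: (NA) only forces $V(\wh{N})=0$, and a \emph{nontrivial} strategy with identically vanishing liquidation value is perfectly compatible with (NA) --- this is exactly the two-long-positions-hedging-each-other phenomenon of Example~\ref{9.11.2017.4}, and it is what destroys closedness in Example~\ref{3.1.2018.1} despite (NA) holding there. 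You sense this in your last paragraph but leave the resolution to a vague ``uniform supermartingale-type control''. The correct fix is concrete: under~(\ref{18.5.2017.1}), inequality~(\ref{18.5.2017.3}) shows that on the event $A$ from Step~2 of the proof of Proposition~\ref{19.5.2017.1} every share bought at a strictly positive price contributes a strictly negative amount to $V(\wh{N})$, so $V(\wh{N})\ge 0$ a.s.\ forces $\wh{N}_{s,s}=0$ on $\{S_s>0\}$ for all $s$; after the (harmless) normalization that no shares are bought at zero prices, this contradicts $\sum_s\wh{N}_{s,s}=1$. You also need this same quantitative estimate, not just (NA), to cancel the limit strategy and restart the induction in the standard Kabanov--Stricker scheme. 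Without making this step explicit your proof of~(i) is incomplete; with it, your route is a legitimate, more self-contained alternative to the paper's citation of the robust-no-arbitrage closedness theorem.
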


\begin{Proposition}\label{15.12.2017.2}
Let $|\Omega|<\infty$. The model satisfies (NA) iff there exists a $Q\sim P$ such that
(\ref{19.5.2017.4}) holds.
\end{Proposition}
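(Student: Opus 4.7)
The plan is to prove both implications using the representation (\ref{27.11.2017.1}) for the liquidation value $V(N)$ together with a level-set decomposition of selling schedules into one-parameter families of stopping-time strategies; this reduces the multi-index bilinearity to the scalar stopping-time inequality (\ref{19.5.2017.4}).

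For the direction (NA) $\Rightarrow$ existence of $Q$, the key observation is that under (\ref{18.9.2014.3}) the admissible strategies $N$ form a polyhedral cone in a finite-dimensional vector space, and by (\ref{27.11.2017.1}) the map $N\mapsto V(N)$ is linear into $\bbr^\Omega$. Hence $\mathcal{A}_0:=\{V(N):N\text{ admissible}\}$ is a polyhedral cone, and so is $\mathcal{A}:=\mathcal{A}_0-L^0_+(\Omega,\mathcal{F},P)$; in particular $\mathcal{A}$ is closed. Under (NA), $\mathcal{A}\cap L^0_+=\{0\}$, and since $-L^0_+\subseteq\mathcal{A}$ the polar cone $\mathcal{A}^*$ sits inside $L^0_+$. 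Strictly separating each $e_\omega\notin\mathcal{A}$ from $\mathcal{A}$ produces $Q_\omega\in\mathcal{A}^*\cap L^0_+$ with $Q_\omega(\omega)>0$; then $Q:=\sum_\omega Q_\omega$, normalized, is a probability measure equivalent to $P$ satisfying $E_Q[V]\le 0$ for all $V\in\mathcal{A}$. To verify (\ref{19.5.2017.4}), fix $i\in\{0,\ldots,T-1\}$ and $\tau\in\mathcal{T}_i$ and take the strategy $N_{i,u}:=1_{\{\tau>u\}}$ for $u\ge i$, $N_{s,u}:=0$ otherwise. Then $N_{i,u-1}-N_{i,u}=1_{\{\tau=u\}}$ and $X_{i,i}=0$, so by (\ref{27.11.2017.1}) one has $V(N)=X_{i,\tau}$. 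The inequality $E_Q[X_{i,\tau}]\le 0$ divided by $(1+(1-\alpha)r)^T$ is exactly (\ref{19.5.2017.4}).

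For the reverse direction, the goal is to show $E_Q[V(N)]\le 0$ for every admissible $N$. For each fixed $s\in\{0,\ldots,T-1\}$, introduce the level stopping times $\sigma^{(s)}_y(\omega):=\inf\{u\ge s+1:N_{s,u}(\omega)<y\}$ for $y>0$; since $N_{s,T}=0$ these lie in $\mathcal{T}_s$. The pointwise identity $\{y:\sigma^{(s)}_y(\omega)=u\}=(N_{s,u}(\omega),N_{s,u-1}(\omega)]$ immediately yields
\[
\sum_{u=s+1}^T(N_{s,u-1}-N_{s,u})X_{s,u}=\int_0^{N_{s,s}}X_{s,\sigma^{(s)}_y}\,dy.
\]
Since $|\Omega|<\infty$ makes all integrands bounded, Fubini gives
\[
E_Q\!\left[\sum_{u=s+1}^T(N_{s,u-1}-N_{s,u})X_{s,u}\right]=\int_0^\infty E_Q\!\left[1_{\{y\le N_{s,s}\}}X_{s,\sigma^{(s)}_y}\right]dy.
\]
For fixed $y>0$ the event $B:=\{y\le N_{s,s}\}$ is $\mathcal{F}_s$-measurable, so $\tilde\tau:=\sigma^{(s)}_y\,1_B+s\cdot 1_{B^c}\in\mathcal{T}_s$, and exploiting $X_{s,s}=0$ one has $E_Q[1_B X_{s,\sigma^{(s)}_y}]=E_Q[X_{s,\tilde\tau}]\le 0$ by (\ref{19.5.2017.4}). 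Summing over $s$ yields $E_Q[V(N)]\le 0$, and combined with $V(N)\ge 0$ $P$-a.s.\ and $Q\sim P$ this forces $V(N)=0$, proving (NA).

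The main obstacle is to promote the scalar condition (\ref{19.5.2017.4}), which only controls buy-at-$i$-sell-at-$\tau$ strategies, to the full dual inequality against $\mathcal{A}$, whose elements $V(N)$ involve arbitrary adapted, non-negative, non-increasing selling schedules at every time $s$. The level-set representation together with the localization $\tilde\tau=\sigma^{(s)}_y\,1_B+s\cdot 1_{B^c}$ (which uses $X_{s,s}=0$ to absorb the restriction to an $\mathcal{F}_s$-measurable set into a stopping time) is the step that bridges this gap; the closedness of $\mathcal{A}$ needed for the converse is essentially free under $|\Omega|<\infty$ via the polyhedral structure, thereby avoiding the obstruction displayed in Example~\ref{3.1.2018.1} for infinite probability spaces.
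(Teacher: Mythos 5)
Your proof is correct, and the converse direction (``$\exists Q$ satisfying (\ref{19.5.2017.4}) $\Rightarrow$ (NA)'') takes a genuinely different route from the paper. The paper passes through optimal stopping: for each $i$ it takes the martingale part $M^i$ of the $Q$-Snell envelope of $(X_{i,t})_{t=i,\ldots,T}$, which satisfies $M^i_i=0$ and $M^i_t\ge X_{i,t}$ precisely because $E_Q(X_{i,\tau})\le 0$ for all $\tau\in\mathcal{T}_i$, and then concludes $E_Q(\eta_T)\le 0$ by summation by parts and the martingale property against the $\mathcal{F}_{t-1}$-measurable weights $N_{i,t-1}$. You instead disintegrate each non-increasing selling schedule $(N_{s,u})_u$ into the one-parameter family of level stopping times $\sigma^{(s)}_y$, obtain $\sum_u(N_{s,u-1}-N_{s,u})X_{s,u}=\int_0^{N_{s,s}}X_{s,\sigma^{(s)}_y}\,dy$, and reduce to (\ref{19.5.2017.4}) via Fubini and the localization $\tilde\tau=\sigma^{(s)}_y 1_B+s\,1_{B^c}$ (using $X_{s,s}=0$ to absorb the $\mathcal{F}_s$-measurable event $B=\{y\le N_{s,s}\}$ into a stopping time). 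Both arguments are sound; yours is more elementary in that it avoids Snell envelopes and the Doob decomposition entirely, while the paper's supermartingale-deflator style argument is the one that scales to settings where a layer-cake representation would be awkward. (A cosmetic point: your identity $\{y:\sigma^{(s)}_y=u\}=(N_{s,u},N_{s,u-1}]$ holds for $u=s+1$ only after intersecting with $(0,N_{s,s}]$, which is all the integral uses.) For the forward direction you and the paper do essentially the same thing: the paper invokes Yan's/Schachermayer's separation theorem on the finite space, whereas you make the separation explicit via the polyhedral-cone structure of $\mathcal{A}$, which also yields the needed closedness for free; the extraction of (\ref{19.5.2017.4}) from $E_Q(V(N))\le 0$ via the buy-at-$i$, sell-at-$\tau$ strategies is identical.
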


\begin{proof}[Proof of Proposition~\ref{6.11.2017.2}]

{\em Ad (i):} The tax model can be identified with the transaction costs model~(\ref{19.5.2017.6}). Let us show that it satisfies the ``robust no-arbitrage'' property for models with transaction costs, as introduced in Definition~1.9 of Schachermayer~\cite{schachermayer.2004}. In the special case that we consider here, this means that if prices are finite, one has to find slightly more favorable prices under which the model still satisfies (NA). Define
\beao
\ov{C}^i_i := S_i\left(1-\frac13\frac{\alpha(1-\alpha)r}{(1+(1-\alpha)r)^{T-i}}\right),\quad
\un{C}^i_t := S_t - \alpha (S_t-S_i) +  \frac13\alpha(1-\alpha)r S_i,
 \eeao 
$i=0,\ldots,T-1,\ t= i+1,\ldots,T$
and again $\un{C}=-\infty,\ \ov{C}=\infty$ otherwise. 
The no-arbitrage property of $(\un{C},\ov{C})$
follows along the lines of the proof of Proposition~\ref{19.5.2017.1}, especially by an inspection
of estimation~(\ref{18.5.2017.3}).
Then, by Theorem~2.1 of \cite{schachermayer.2004}, it follows that $\mathcal{A}$
is closed regarding the convergence in probability.

{\em Ad (ii):} With (i), the proof of existence of a separating measure is 
straightforward. For the convenience of the reader, we briefly repeat the well-known arguments (cf. Schachermayer~\cite{schachermayer.1992} for an easily-accessible overview). First, we define the probability measure~$\wt{P}\sim P$
by $d\wt{P}/dP = c/(1+S_0 +\ldots+S_T)$ and $c:=1/E_P\left(1/(1+S_0 +\ldots+S_T)\right)$, for which one has 
\beam\label{8.1.2018.1}
S_i\in L^1(\Omega,\mathcal{F},\wt{P}),\quad i=0,\ldots,T.
\eeam
(i) and Yan's theorem (see, e.g., Theorem~3.1 in \cite{schachermayer.2004}) imply the existence of a probability measure~$Q$ with positive and bounded $dQ/d\wt{P}$ such that
\beam\label{16.12.2017.1}
E_Q(\eta)\le 0\quad \forall \eta\in\mathcal{A}\cap L^1(\Omega,\mathcal{F},\wt{P}).
\eeam
Consider $(X_{s,u})_{s<u}$ from (\ref{28.11.2017.2}). For any self-financing strategy~$(\eta,N)$, the liquidation value~$\eta_T$ is given by (\ref{27.11.2017.1}).
For $i\in\{0,\ldots,T-1\},\ \tau\in\mathcal{T}_i$, we define $N_{i,t}:=1_{\{\tau>t\}}$ and $N_{s,t}=0$ for $s\not=i$ that yields $\eta_T=V(N)=X_{i,\tau}$ for the
corresponding self-financing $(\eta,N)$, with the convention $X_{i,i}:=0$. (\ref{8.1.2018.1}) implies $X_{i,\tau}\in L^1(\Omega,\mathcal{F},\wt{P})$. By (\ref{16.12.2017.1}), it follows that $E_Q(X_{i,\tau})\le 0$ for all $i=0,\ldots,T$,\ $\tau\in\mathcal{T}_i$. Dividing $X_{i,\tau}$ by
the constant~$(1+(1-\alpha)r)^T$ yields (\ref{19.5.2017.4}). At this point, it is crucial that the interest rate is deterministic. In addition, note that $dQ/dP=dQ/d\wt{P} \cdot d\wt{P}/dP$ is bounded.
\end{proof}

\begin{proof}[Proof of Proposition~\ref{15.12.2017.2}]
On a finite probability space, (NA) is equivalent to the existence of a probability measure~$Q\sim P$
with $E_Q(\eta_T)\le 0$ for all $\eta_T$ from (\ref{27.11.2017.1}), see, e.g., again
Theorem~3.1 in \cite{schachermayer.2004} for the non-trivial direction. As argued above, the latter implies that
\beam\label{15.12.2017.3}
E_Q(X_{i,\tau})\le 0\quad \forall i=0,\ldots,T-1,\ \tau\in\mathcal{T}_i,
\eeam
and it remains to show equivalence.
(\ref{15.12.2017.3}) yields the existence of $Q$-martingales $(M^i_t)_{t=i,\ldots,T}$ with 
\beam\label{19.5.2017.5}
M^i_i=0\quad\mbox{and}\quad M^i_t \ge X_{i,t},\quad t=i+1,\ldots,T.
\eeam
Indeed, let $M^i$ be the martingale part of the Snell-envelope of the process~$(X_{i,t})_{t=i,\ldots,T}$. With representation~(\ref{27.11.2017.1}) and (\ref{19.5.2017.5}), it follows that
\beao
\eta_T & = & \sum_{i=0}^{T-1}\sum_{t=i+1}^T (N_{i,t-1}-N_{i,t})X_{i,t}\\
& \le & \sum_{i=0}^{T-1}\sum_{t=i+1}^T (N_{i,t-1}-N_{i,t})
M_t^i\\
& = & \sum_{i=0}^{T-1}\sum_{t=i+1}^T N_{i,t-1}M_t^i
-\sum_{i=0}^{T-1}\sum_{t=i+2}^{T+1} N_{i,t-1}M_{t-1}^i\\
& = & \sum_{i=0}^{T-1}\sum_{t=i+2}^T N_{i,t-1}
(M_t^i-M^i_{t-1}) + \sum_{i=0}^{T-1}N_{i,i}M^i_{i+1},
\eeao
where for the last equality we use that $N_{i,T}=0$. Since $N_{i,t-1}$ is 
$\mathcal{F}_{t-1}$-measurable and $M^i$ is a $Q$-martingale with $M^i_i=0$, it follows that $E_Q(\eta_T)\le 0$.
\end{proof}

\section{Conclusion}

In models with taxes, no-one-period-arbitrage is only a necessary but
not a sufficient condition for dynamic no-arbitrage. 
Thus, we introduce the robust local no-arbitrage~(RLNA) condition as the weakest local condition on stochastic stock price returns  which guarantees dynamic no-arbitrage. (RLNA) can be verified under a similar dichotomy condition as no-one-period arbitrage (see (\ref{29.8.2017.1}) vs. (\ref{30.10.2017.2})). By comparing the boundary~$\kappa_{t,T}$ under which
the stochastic return in period~$t$ has to fall with positive possibility to exclude arbitrage with the riskless interest rate~$r$ one can estimate how non-local the no-arbitrage property is.
The difference between $\kappa_{t,T}$ and $r$ is quite remarkable since for $T-t\to\infty$, $\kappa_{t,T}$ tends to $-\alpha + (1-\alpha)^2r$,
i.e., for $r\ll \alpha$, the just tolerable potential loss in the stock coincides with the previous stock price times the tax rate. This refers to the extreme case in which 
the stock's purchasing price is negligible compared to its current price
and in addition the investor can  defer accrued taxes forevermore. Thus, she 
tolerates possible losses that are only slightly smaller than the taxes that she has to pay
if the stock is liquidated.  Example~\ref{9.11.2017.2} explains the difference between $\kappa_{t,T}$ and $r$ by the possibility that 
the current stock return can be used as a hedging instrument against the future stock return. The puzzling phenomenon that two long positions in the same stock can hedge each other also has consequences for the existence of a separating measure in an arbitrage-free market. 
The phenomenon cannot occur in arbitrage-free frictionless markets or markets with proportional transactions costs. We show that in the tax model with a bank account and one risky stock, no-arbitrage alone does not imply the existence 
of an equivalent separating probability measure (see Example~\ref{3.1.2018.1}). This is in contrast to models with proportional transaction costs for which Grigoriev's theorem shows the opposite. Furthermore, as a by-product of our analysis, one obtains an example  showing that Grigoriev's theorem cannot be extended to dimension~$3$ (see Example~\ref{26.1.2018.2}). 

\addcontentsline{toc}{section}{References}  

\bibliography{references}

\end{document}